\tikzstyle{mybox} = [rectangle,draw,minimum height=1cm, minimum width=6em,fill=blue!20]
\tikzstyle{decision} = [diamond, draw,
\tikzstyle{block} = [rectangle, draw,
\tikzstyle{line} = [draw, -latex']
\tikzstyle{cloud} = [draw, ellipse, node distance=4cm,
\DeclareMathSymbol{\mlq}{\mathord}{operators}{``}
\DeclareMathSymbol{\mrq}{\mathord}{operators}{`'}
\tikzset{My Style/.style={blue, draw=black,fill=pink, minimum size=0.5cm}}
\newcommand{\ignore}[1]{} 
\newcommand{\defin}[1]{\left\{\begin{array}{l} #1 \end{array}\right\}}
\newcommand{\rul}{\leftarrowtail}
\newtheorem{definition}{Definition}
\newtheorem{theorem}{Theorem}
\newtheorem{lemma}{Lemma}
\newtheorem{example}{Example}
\newcommand{\bX}{\bar{X}}  
\newcommand{\bZ}{\bar{Z}}  
\newcommand{\bY}{\bar{Y}}  
\newcommand{\bP}{\bar{P}}
\newcommand{\ba}{\bar{a}}
\newcommand{\eq}{=}
\newcommand{\logic}{{\rm DetLDL}$_f$\,{\rm[SM-PP]}\xspace}
\newcommand{\CPT}{$\tilde{\rm C}${\rm PT(Card)}\xspace}
\newcommand{\e}{\varepsilon}
\newcommand{\GuessNewO}{\textit{GuessNewO}}
\newcommand{\GuessNewE}{\textit{GuessNewE}}
\newcommand{\CopyPO}{\textit{CopyPO}}
\newcommand{\CopyPQ}{\textit{CopyPQ}}
\newcommand{\CopyPE}{\textit{CopyPE}}
\newcommand{\GuessP}{\textit{GuessP}}
\newcommand{\GuessNewP}{\textit{GuessNewP}}
\newcommand{\GuessNewPair}{\textit{GuessNewPair}}
\newcommand{\Copy}{\textit{Copy}}
\newcommand{\Reach}{\textit{Reach}}
\newcommand{\Pick}{\textit{Pick}}
\newcommand{\PickPQ}{\textit{PickPQ}}
\newcommand{\Eq}{\textit{Eq}}
\newcommand{\Elim}{\textit{Elim}}
\newcommand{\Root}{\textit{Root}}
\newcommand{\BG}{\textbf{BG}}
\newcommand{\BE}{\textbf{BE}}
\newcommand{\G}{\textbf{G}}
\newcommand{\E}{\textbf{E}}
\newcommand{\Lo}{\mathbb{L}}
\newcommand{\Process}{{\it Process}}
\newcommand{\ChoosenProcess}{{\it Choosen
		Process}}
\newcommand{\ChooseProcess}{{\it ChooseProcess}}
\newcommand{\Union}{{\it Union}}
\newcommand{\ExecuteProcess}{{\it ExecuteProcess}}
\newcommand{\Dom}{{\rm Dom}}
\newcommand{\id}{{\rm id}}
\newcommand{\Un}{{\bf U}}
\newcommand{\UnT}{{\bf U}}
\newcommand{\mT}{{\bf T}}  
\newcommand{\rneg}{\curvearrowright\xspace}
\newcommand{\lneg}{\curvearrowleft\xspace}
\newcommand{\converse}{\smallsmile}
\newcommand{\iter}{\uparrow}
\newcommand{\comp}{\ensuremath \mathbin{;}}
\newcommand{\ppath}{\textbf{p}}
\newcommand{\nat}{\mathbb{N}}  
\newcommand{\strA}{\fA}
\newcommand{\strB}{\fB}         
\newcommand{\strC}{\fC}
\newcommand{\CH}{{\sf CH} }
\newcommand{\SAT}{{\sf SAT} }
\newcommand{\SEN}{{\sf SEN} }
\newcommand{\cL}{\mathcal{L}}
\newcommand{\cP}{\mathcal{P}}
\newcommand{\cA}{\mathcal{A}}
\newcommand{\cR}{\mathcal{R}}
\newcommand{\fA}{\mathfrak{A}}
\newcommand{\fB}{\mathfrak{B}}
\newcommand{\fC}{\mathfrak{C}}
\newcommand{\cK}{\mathcal{K}}
\newcommand{\arity}{\mathrm{ar}}
\newcommand{\ar}{\arity{}}
\newcommand{\iar}{\mathrm{iar}}
\newcommand{\Tr}{\mathit{Tr}}
\newcommand{\last}{\mathit{last}}
\newcommand{\Domain}{\mathcal{D}}
\newcommand{\Last}{\mathsf{Last}}
\newcommand{\emp}{\mathbf{e}}
\newcommand{\schema}{vocabulary\xspace}
\newcommand{\withconverse}[1]{}
\newcommand{\V}{\mathbb{V}}
\newcommand{\Sch}{\mathcal{M}}
\newcommand{\names}{\mathit{Names}}
\newcommand{\vuni}{\V}
\newcommand{\inst}{\ensuremath{T}\xspace}
\newcommand{\semm}[3]{\llbracket #1 \rrbracket_{#2}^{#3}}
\newcommand{\sem}[1]{\semm{#1}{\inst{}}{}}
\newcommand{\ch}{h}
\newcommand{\sema}[1]{\semm{#1}{\inst{}}{\ch}}
\newcommand{\semNoI}[1]{\semm{#1}{}{}}
\newcommand{\bbar}{\text{DR}\xspace}
\newcommand{\gbrv}{global DR\xspace}
\renewcommand{\setminus}{-}
\title{Towards Capturing PTIME with no Counting Construct\\
(but with a Choice Operator)\\
\begin{small}Technical Report of August 2021\end{small}} %
\author{
Eugenia Ternovska
\affiliations
Simon Fraser University
\emails
ter@sfu.ca
}
\begin{document}

\maketitle

\begin{abstract}

The central open question in Descriptive Complexity is whether there is a logic that characterizes deterministic polynomial time (PTIME) on relational structures. Towards this goal, we define a logic that is obtained from first-order logic with fixed points, FO(FP), by a series of transformations that include restricting logical connectives and adding a dynamic version of Hilbert's Choice operator Epsilon. The formalism can be viewed, simultaneously, as an algebra of binary relations and as a linear-time modal dynamic logic, where algebraic expressions describing ``proofs'' or ``programs'' appear inside the modalities. 

We show how counting, reachability and ``mixed'' examples (that include linear equations modulo two) are axiomatized in the logic, and how an arbitrary PTIME Turing machine can be encoded. For each fixed Choice function, the data complexity of model checking is in PTIME. However, there can be exponentially many such functions. 

A crucial question is under what syntactic conditions on algebraic terms checking just one Choice function is sufficient. Answering this question requires a study of symmetries among computations.
This paper sets mathematical foundations towards such a study via algebraic and automata-theoretic techniques.

\end{abstract}

\section{Introduction}\label{sec:Introduction}

The goal of the field of descriptive complexity is to characterize computational complexity in a machine-independent way, in order to apply logical and model-theoretic methods to the study of complexity \cite{Immerman-book}.   However, this goal presents multiple challenges. 
A logical characterization of NP is given by Fagin. His celebrated  theorem \cite{Fagin:1974} states that the complexity class NP coincides, in a precise sense, with second-order existential logic.
The central open question in the area is whether there is a logic that exactly
characterizes deterministic polynomial-time  (PTIME) computability on relational structures.
The problem was first formulated by Chandra and Harel \cite{ChandraHarel:1982} and made precise by Gurevich \cite{Gurevich-challenge} (see also \cite{Grohe:LogicPTIME:2008}). 
Following the formulations of Dawar \cite{Dawar:Gurevich:70}, we have: 
\begin{definition}\label{def:logic} A logic $L$ is a function $\SEN$ associating a recursive set of sentences to each finite vocabulary $\tau$ together with a function $\SAT$ that associates to each $\tau$ a recursive satisfaction relation relating finite $\tau$-structures to sentences that is also isomorphism-invariant. That is, if $\strA$ and $\strB$ are isomorphic $\tau$-structures and $\phi$ is any sentence of $\SEN(\tau)$ then $(\strA, \phi) \in \SAT(\tau)$ if, and only if, $(\strB, \phi) \in \SAT(\tau)$. 
\end{definition}	
		
\begin{definition}\label{def:logic-captures-PRTIME} A logic $L$ \emph{captures PTIME} if 
\begin{enumerate}[(i)]	
\item	there is a computable function that takes each sentence of $L$ to a polynomial  time  Turing machine that recognises the models of the sentence, and 
\item for every polynomial-time recognizable class $\cK$ of structures, there is a sentence of $L$ whose models are exactly $\cK$. 
\end{enumerate}
\end{definition}

Gurevich, in \cite{Gurevich-challenge}, conjectured that a logic for PTIME does not exist. If the conjecture is true, it immediately follows that PTIME $\neq$ NP because,  by Fagin's theorem, \emph{there is} a logic for NP.  Blass, Gurevich and Shelah in \cite{BlassGurevichShelahChoicelessPTIME}  also suggested a formalism of Choiceless Polynomial Time equipped with a cardinality operator (denoted \CPT), in order  to capture the \emph{choiceless} fragment of PTIME.
But, the exact complexity of \CPT is still unknown. Currently, PTIME is known to be captured on linearly ordered structures only. The result was independently obtained by  Immerman and Vardi in the 1980's. It states that, on \emph{linearly ordered} structures, PTIME is captured by 
first-order logic with fixed points, FO(FP)  \cite{Immerman86,Vardi82}. 
However, even extended with counting terms, FO(FP) + C, fixed-point logic is too weak to express all polynomial-time properties on  structures without such order, as shown by 
Cai, F\"{u}rer and Immerman in \cite{CFI92}. 
For \emph{arbitrary}  structures, capturing PTIME remains an intriguing open problem, that, so far, has resisted any attempt to solve it. 

There has been a lot of development in the area.
Dawar, Grohe, Holm and
 Laubner introduced Rank logic in \cite{DawarGHL-rank-logic-LICS:2009}. 
More recent work includes that by Pakusa,
 Schalth{\"{o}}fer and
Selman \cite{PakusaSS18},
Gr\"{a}del and Pakusa \cite{GradelP19},
Atserias, Dawar and
 Ochremiak \cite{AtseriasDO19},
 Dawar and   Wilsenach  \cite{DawarW20}, 
 among others.
The recent result by Lichter separated Rank logic \cite{DawarGHL-rank-logic-LICS:2009} from PTIME \cite{Lichter21}.

Despite the resent advances, Choiceless Polynomial Time, \CPT, proposed in 1999, remains perhaps the most prominent (and the most studied)
candidate for a logic for PTIME. It is based on a machine model of Abstract State Machines applied in a set-theoretic context. 
The idea is to replace arbitrary choice  with a polynomially bounded amount of parallel computation. The  model  uses manipulations of hereditarily finite sets over the input
structure by means of set-theoretic operations. 
A cardinality construct, Card,  associates, with every set, the finite ordinal describing its cardinality.  An equivalent simplified definition of \CPT is
given by Rossman  \cite{Rossman:CPT}. 
Another equivalent formalization is  the polynomial-time interpretation logic (PIL) of \cite{GraedelPakusa-et-al-15} that
``replaces the machinery
of hereditarily finite sets and comprehension terms by traditional
first-order interpretations.'' The goal of introducing this version was to simplify the original formalization and to make typical methods used in finite model theory, for instance	those based on Ehrenfeucht-Fra\"{i}ss\'{e} games, applicable for the analysis of the expressive power.

It is known that \CPT   is strictly more expressive than FO(FP) + C. In particular, it can express the CFI example \cite{CFI92}.  Importantly, in both PIL and  \CPT, 
the built-in cardinality constructs are essential. Without them, it is impossible to express even elementary counting properties such as the query EVEN, as shown by Blass, Gurevich and Shelah  in \cite{BGS:2002}.

In contrast to choiceless computations of \CPT, 
an important use of choice in Descriptive Complexity is that of Gire and Hoang \cite{Gire-Hoang}.
The authors use a choice operator constrained by a formula whose application is restricted  to symmetric elements only.
It extends the power of FO(FP) +  C, while still defining only PTIME properties
(see \cite{Richerby-thesis,Dawar-Richerby} for a detailed study of this formalism). Gire and Hoang’s operator appears to be hard to work with, and it is not easy to prove that a property is inexpressible in their logic.  Similarly to \CPT, it remains open whether this logic captures  PTIME computations on all classes of structures.  
See  \cite{Richerby-thesis} for a detailed discussion.

To achieve further progress in Descriptive Complexity, we believe  it is crucial to make use of  methods from other areas of mathematics.
For instance,  in the area of Constraint Satisfaction Problem,  algebraic techniques have proven to be very effective in the dichotomy results of Bulatov and Zhuk  \cite{Bulatov17,Zhuk17}.
On the other hand,  it is not immediately clear how to represent the cardinality construct of \CPT  and its polynomial restriction on the  length of a parallel computation, algebraically. An equivalent approach of PIL  \cite{GraedelPakusa-et-al-15}, despite making FMT techniques more accessible, has not been shown to make a bridge to other areas of mathematics either. Moreover, PIL uses relations that change their arity from state to state, which are also  difficult to formalize algebraically. 
Similarly, the work of Gire and Hoang  \cite{Gire-Hoang} is far from an algebraic formalization.  The approach requires automorphism tests that are feasible only on  a simple kind of structures. The authors study these structures first, and then  introduce a logical reduction operator that reduces other PTIME queries to queries on such abstract structures. We do not see how to represent  these reduction, or the more precise reformulation by means of first-order interpretations developed by Richerby \cite{Richerby-thesis}, in a simple way suitable for further analysis.

\vspace{2ex}
\noindent\textbf{In this paper},  we develop an algebraic approach from the very beginning. Our goal is to  set foundations for a mathematical study  of the combinatorics of proofs.  
Such  proofs or programs are specified by algebraic terms. The terms are interpreted, first, as binary relations over  relational  structures, and, second, as mappings on finite strings (here, called \emph{paths}) of structures over the same fixed relational vocabulary. Moreover, we utilize a link to a Dynamic Logic to associate, with each algebraic term, a formal language.

The technical ingredients of our proposal are: 
\begin{enumerate}
	\item  an algebra of binary relations on relational structures;
	\item a restriction of logical connectives and an iterator construct to their deterministic (i.e., functional) counterparts;
	\item   access to data values via  equality checks only;
	\item atomic transitions are a restriction of Conjunctive Queries;
	\item  singleton-set monadic relations that change from state to state, while input relations are arbitrary; 
	\item a Choice  operator $\e$, which is  a free function variable ranging over history-dependent concrete Choice functions.
\end{enumerate}

Our general strategy is, first, to allow the logic to express all of NP (due to the use of the free Choice function, which is similar to existential second-order quantification). A query with a free Choice function variable defines a set of concrete Choice functions constrained by an algebraic term. The functions, intuitively, represent proof strategies or  traces of non-deterministic programs. Then, the goal is to use algebraic techniques to study the structure of this set, to find syntactic conditions for a logic for PTIME.

Our technical approach is based on
adding the ability to reason about information propagations in
first-order logic with fixed points, FO(FP). Two parameters are used
to control expressiveness and complexity: the set of 
connectives, which we restrict, and the expressiveness of atomic units, which we take to be a singleton-set monadic modification of conjunctive queries.

\vspace{2ex}

We now briefly explain how the logic was obtained. We took FO(FP) and 
looked into the ``internals'' of declarative computations. By specifying inputs and outputs of atomic expressions,
we, essentially,  developed an ``operational semantics'' of classical logic, expressed as an algebra of binary relations.\footnote{Our motivation, in part, came from the fact that FO$^2$, first-order logic of two variables, has many important decidable properties such as order-invariance \cite{DBLP:conf/lics/ZeumeH16}.  However, for our purposes, we had to go further, from a two-variable logic  to an algebra of binary relations on structures.}
We analyzed the ways information propagates from inputs to outputs, and found a way to control such propagations.  We then investigated what syntactic features make information flows efficient. This investigation was the most significant part of the mental effort behind this project. 

In the resulting algebra, all logical connectives are restricted  to be function-preserving, in the sense that \emph{if} all atomic relations are functional, then so is the compound expression. 
To resolve the remaining non-determinism of atomic expressions,  a  free Choice function variable $\e$ is used. The variable  ranges over  concrete Choice functions that map history to the next state.

The same logic can be viewed in two equivalent ways:
(1) as an algebra of binary relations and  (2)
as a two-way Deterministic Linear Dynamic Logic over finite traces.
The second view provides a link to formal language theory and automata via an equivalent Path semantics. Each path has a natural correspondence to an equivalence class of concrete Choice functions.	 
The high-level idea is to (1)
allow potentially exponentially many ``guessed'' PTIME computations constrained by a formula; (2) use the two views, algebraic and language-theoretic,
to study  symmetries among those computations.
The logic has \emph{no counting construct} and \emph{no programming constructs} --- they are definable.

It is known that adding more and more powerful counting constructs to a logic does not allow one to express reachability. Conversely, logics with extremely powerful iteration mechanism are unable to count, see \cite{Libkin-book04}. A combination of these two fundamentally different types of information propagation is possible  with the help of our Epsilon operator. We use this operator to develop both counting and reachability examples. 
Moreover,  example that  \emph{combine} the two types of propagation, e.g., mod 2 linear equations (that include the CFI example \cite{CFI92}, see \cite{AtseriasBD09}), are also expressible. At the same time, we show that:

\begin{itemize}

	\item  The logic  encodes  an arbitrary PTIME Turing machine; 
	\item For each \textbf{fixed} Choice function, the data complexity of model checking is in PTIME.
\end{itemize} 	
However, there can be exponentially many such functions. 	
The remaining goal is to identify  condition for when all  computations, that correspond to different Choice functions, are indistinguishable in a precise mathematical sense. Decidability of these conditions would give us a logic for PTIME. The syntax of this logic would include the syntax presented in this paper, plus some decidable conditions for  algebraic terms to be ``good''.\footnote{We have already made a significant progress in this direction.}

\vspace{2ex}

\noindent\textbf{The rest of the paper is organized as follows.} In the next section, we discuss related work. 

Then, in \textbf{Section \ref{sec:Preliminaries}}, we introduce basic notations and give technical preliminaries.

In \textbf{Section \ref{sec:Algebra-syntax}}, we present the syntax of the algebra of binary relation, the main algebra we develop in this paper. Moreover, we introduce a particular kind of atomic transitions --- singleton-set monadic, primitive positive definable (SM-PP). The results of this paper concern this algebra, with this type of atomic transitions.

In \textbf{Section \ref{sec:Algebra-semantics}}, we define the semantics of the algebraic operations in terms of binary relations on structures over a fixed relational vocabulary $\tau$. 
First, we define the semantics of a so-called Function-Preserving Fragment, where if atomic relations are functional, then so are the compound expressions. Second, we define the semantics of the Epsilon operator and Converse. Finally, a binary satisfaction relation is given.

Next, in \textbf{Section \ref{sec:Dynamic-Logic}}, we present an equivalent reformulation of the algebra as a modal Two-Way Deterministic Linear Dynamic  Logic (DetLDL$_f$), where proofs or programs appear inside (definable) modalities. As is typical in Dynamic logics \cite{HarelKozenTiuryn-DL}, the syntax is two-sorted,  with one sort for process expressions, that are expressions in the original algebra, and another sort for state formulae, that are evaluated with respect to states (relational structures) of a transition system (Kripke structure). In the same section, we introduce an important boolean query that is true in a structure if and only if there is a successful execution of a given program, that  is a part of the query, on that input structure. Using this query, we define the notion of a computational problem specified by an algebraic term (as an isomorphism-closed class of structures  that make the query true). We also define a set of \emph{witnesses for a computational problem} as a set of Choice functions. Equivalently, a set of witnesses is defined as a set of traces (finite words in the alphabet of structures). The main goal of this paper is to provide tools for the study of such sets of witnesses.
In the same section, we demonstrate that the operations of the algebra are sufficient to define the main imperative programming constructs, so that the full power of Regular Expressions, for that purpose, is not needed.

In \textbf{Section \ref{sec:complexity-main-task}}, we show that the data complexity of checking that there is a successful execution of a program represented by an algebraic term  instantiated with a fixed Choice function,  is in PTIME in the size of the input structure.

In \textbf{Section \ref{sec:Path-Semantics}}, we define a Path semantics of the Dynamic logic DetLDL$_f$.  We also  prove the equivalence of the Path semantics and the semantics in terms of binary relations on structures. This correspondence gives us two interconnected perspectives on the logic --- algebraic and automata-theoretic.  

In  \textbf{Section \ref{sec:Examples}}, we give examples of expressing properties in the logic, including counting and reachability.  We also show how to mimic the constructs of Regular Expressions.

In \textbf{Section \ref{sec:PTIME-TM}}, we show how any PTIME Turing machine can be axiomatized in the logic. The algebraic term, first, guesses an order on a tape of a such a machine;  second, it encodes an input structure on a tape for that order; and, third, it simulates the machine's instructions, until an accepting state is reached.

We conclude, in \textbf{Section \ref{sec:Discussion}}, with a summary, and an outlook for further development.  Moreover, we discuss future research directions that are both  related to the main goal and are of an independent interest.

\vspace{2ex}

\section{Related Work}\label{sec:Related}

We now present a brief comparison with related work.

\noindent\textbf{Algebras of binary relations} have been studied  before.
Such an algebra was first introduced by De Morgan.
It has been extensively developed by Peirce and then Schr\"{o}der. It was abstracted to relation algebra RA by J\'{o}nsson and Tarski in \cite{Jonsson-Tarski:1952}.
More recently, relation algebras were studied by Fletcher,  Van den Bussche,
Surinx and their collaborators in a series of paper, see, e.g. \cite{SVV17,FGLSVVVW}.
The algebras of relations consider various subsets of operations on binary relations as primitive, and other as derivable. 
In another direction, \cite{JacksonStokes:2011,McLean17}  and others study partial functions and their algebraic equational axiomatizations. 

\noindent\textbf{Cylindric Algebras}
Since, in our algebra, all the variables that are not constrained by the algebraic expressions are  implicitly cylinrified,  a closely related work is that on cylindric algebras \cite{HenkinMonkTarskiCylindricAlgebrasPart1}. These algebras were introduced by Tarski and others as a tool in the algebraization of the first-order predicate calculus. See \cite{DBLP:conf/csl/Bussche01} for a historic context in connection with Database theory and Codd's relational algebra. However, a fundamental difference is that, in our logic, unconstrained variables are not only cylindrified, but their interpretation, if not modified, is ``transferred'' to the next state unchanged. We call this property the Law of Inertia.  This property gives us, mathematically, a very different formalism.

\noindent\textbf{Approaches to Choice}, in connection to Finite Model Theory,   
 include the work by Arvind and Biswas \cite{ArvindBiswas87}, Gire and Hoang \cite{Gire-Hoang},  Blass and Gurevich \cite{BlassGurevich:Choice} and by  Otto \cite{DBLP:journals/jsyml/Otto00}, among others.  
  Richerby and Dawar \cite{Richerby-thesis,Dawar-Richerby}  survey  and study logics with various forms of choice, dividing them into three broad categories --- data, formula and denotation choice.  

Choice operator Epsilon was introduced by Hilbert and Bernays in 1939  \cite{HilbertBernays:1939} for proof-theoretic purposes,  without  giving a model-theoretic semantics. The goal was  to systematically eliminate quantifiers in proofs, to reduce them to reasoning in simpler systems. 
Hilbert's $\e$ has been studied extensively by  Soviet logicians Mints, Smirnov and Dragalin in 1970's, 80's and 90's, see \cite{Soloviev17}. The semantics of this operator is still an active research area, see, e.g. \cite{Abrusci17,Wirth17b} and other papers in that journal special issue.\footnote{Journal of Applied Logics -- IFCoLog Journal of Logics and their Applications, Special Issue on Hilbert’s epsilon and tau in Logic, Informatics and Linguistics,
Guest Editors, Stergios Chatzikyriakidis, Fabio Pasquali and Christian Retor\'{e}.
Volume 4, Number 2, March 2017. }

Similarly to \cite{Gire-Hoang}, our Choice is constrained by expressions in the logic. 
But, unlike other approaches, our version of this operator formalizes a \emph{proof strategy}, where, in making a choice, the history is taken into account. 
This dependency on the history reflects  how proofs are constructed in, e.g., Gentzen's proof system --- when selecting an element witnessing an existential quantifier, we must ensure that the element is ``new'', i.e.,  has not appeared earlier in the proof. Given this observation, we believe that proof's history must have been taken into account in Hilbert's reasoning one way or another.\footnote{We find it fascinating that the famous Hilbert's 24th problem concerns the question of when proofs are ``the same''. This question, at least in principle, echoes the choice invariance problem in our formalization.}

\noindent\textbf{Choice Invariance}  One problem with a set-theoretic Choice operator, as discussed in \cite{BlassGurevich:Choice}, is that for FO, and thus for its extensions such as FO(FP), choice-invariance is undecidable. Therefore, it appears that, by restricting our attention to choice-invariant formulae, one would obtain an undecidable syntax, thus violating a basic principle behind a well-defined logic for PTIME.  
We, on the other hand,  
\emph{restrict FO connectives}, similarly to Description logics \cite{DescrLogic2003handbook}. 

\noindent\textbf{Choice vs Choiceless} 
\CPT, according to the authors,  is an attempt to characterize what can be done without introducing choice: ``it is impossible
to implement statements like ``pick an arbitrary element $x$
and continue'', which occur in many high-level descriptions
of polynomial-time algorithms (e.g. Gaussian elimination, the
Blossom algorithm for maximum matchings, and so on)'' \cite{BlassGurevichShelahChoicelessPTIME}.

Our logic, on the other hand, is designed to model precisely such algorithmic constructions, as shown in the examples. In fact, the logic models every polynomial time computation by, first, picking an arbitrary new domain element repeatedly, to guess an order on a tape of a Turing machine; and, then, continuing with the computation for that order.

\noindent\textbf{Linear Order}
Definability of a linear order of domain elements is a common theme in Descriptive Complexity. For the logic of Gire and Hoang \cite{Gire-Hoang}, according to Richerby \cite{Richerby-thesis}, it is unclear whether a linear order is always definable,  and what the power of the logic is on classes of structures where no order can be constructed.  
In \CPT, hereditarily finite sets are used to create all linear orders in parallel, on a small subset of the input structure.

We do not create all orders in parallel, but, in simulating a PTIME Turing machine, a guessed  Choice function corresponds to one such order. The order on domain elements is not defined with respect to the input structure, but is \emph{induced} by an order of states in a Kripke structure that are visited during a computation.

 \noindent\textbf{Other Similarities and Differences}	Similarly to \CPT and PIL, our logic directly  works on structures and uses, essentially, a machine model. In our case, each atomic machine instruction specifies a (non-deterministic) expansion of its inputs with the output relations, while preserving the rest. Unlike other approaches, we have an algebra of binary relations.
It is interesting that  hereditarily finite sets, used in \cite{BGS:2002}, do reappear in our constructions (when we study the combinatorics of proofs), however we manage to avoid working with them explicitly.

 \noindent\textbf{Counting}
 Blass, Gurevich and Shelah in  \cite{BGS:2002} write: 
``What's so special about counting? There is a psychological reason for wanting to add it to any complexity class that doesn't contain it:
Counting is such a fundamental part of our thinking process that its absence in a computation model strikes us as a glaring deficiency.''   
The authors of \cite{GraedelPakusa-et-al-15} also use counting, in the form of H\"{a}rtig
quantifiers (which are classical quantifiers for cardinality comparison).
 
We, on the other hand, do not include any cardinality check as a primitive. We observe that, algorithmically, counting is similar to many other step-wise algorithms. 
Unlike the logics with various forms of cardinality constructs, we formalize algorithmic instructions we could give to a child: e.g., to count to four,  select a random new element exactly four times. 
Needless to say, this process crucially relies on the history-dependent Choice operator.

We also note that in Gire and Hoang's work \cite{Gire-Hoang},  a counting construct is not present, yet the logic contains FO(FP) + C (see also Theorem 37 of \cite{Richerby-thesis} for a more transparent proof).

 \noindent\textbf{Polynomial  Bounds on the Length of a Computation} \CPT \cite{BGS:2002} sets polynomial bounds  on the number of steps and the number of elements a program is allowed to access. These conditions ensure a polynomial upper bound, in terms of the size of the input structure. 
Similar conditions are needed for the polynomial-time interpretation logic \cite{GraedelPakusa-et-al-15}.

In our case, the length of  computations is controlled via internal logical means. States of the transition system are represented by relational structures over the same fixed vocabulary $\tau$. 
The number of tuples  (of singleton-set relations) of a fixed length $k$ is $n^k$. Moreover, the semantics of the algebraic operations excludes infinite loops (if there is a loop, there is no model). Thus, for a fixed Choice function, there could be only a polynomial number of transitions. For a formal proof of this property, please see Theorem \ref{th:membership-PTIME}.

\noindent\textbf{Arities of the Relations} The authors of   \cite{GraedelPakusa-et-al-15}    
state that ``additional
power of \CPT over FO(FP) + C comes from
the generalization of relational iteration in a fixed arity (as in
fixed-point logics) to iterations of relations of changing arities.''
Moreover, according to the authors, one
can easily show that signatures consisting only of monadic	predicates are not sufficient.
	
In contrast, in our logic, all relations that change from state to state are monadic. But, our logic is different from MSO because it can express the query EVEN, which is not in MSO, unless the domain is ordered. We are not aware of any other approach to a logical characterization of PTIME with monadic relations.

\vspace{2ex}

\section{Technical Preliminaries}
\label{sec:Preliminaries}

In this section, we set basic notations and terminology. Most notably, 
a module vocabulary provides names for atomic transductions; instantiation function maps relational variables to constant symbols of a relational vocabulary. We review Conjunctive Queries and PP-definable relations. 
 Importantly, we introduce  Choice functions,  branches and paths.

\noindent A \emph{(module) \schema} $\Sch{}$ is a triple $(\names,\ar{},\iar{})$:
\begin{compactitem}
	\item $\names$ is a nonempty set, the elements of which are called \emph{module names};
	\item $\ar{}$ assigns an arity to each module name in $\names{}$; 
	\item $\iar{}$ assigns an input arity to each module name $M$ in $\names{}$, where $\iar{(M)}\leq\ar{(M)}$.
\end{compactitem}

The terminology and notations in the above paragraph are adopted from Aamer et al., \cite{ABSTV:KR:2020}.
In this paper, $\names$ will be a set of macros that stand for axiomatizations of atomic transductions in some logic, to be specified. 
We assume familiarity with 
the basic notions of first-order (FO) and second-order (SO) logic  (see, e.g., \cite{Enderton}) and use `$:=$' to mean ``is by definition''.

We fix a countably infinite supply of relational variables $\vuni$. 
An \emph{instantiation function} $s : \vuni \to \tau$ maps variables in $\vuni$ to  a relational vocabulary
$\tau$. 
In practice, $\tau$ is a set of relational symbols that are substituted for  variables in a finite expression, so  we  may assume it is finite (but is of an unlimited size).  
Let $\tau \ := \ \{S_1, \dots , S_n\}$,  each $S_i$ has an associated arity $r_i$,  and $\Domain$ be a non-empty set.
A \emph{$\tau$-structure} $\strA$ over  domain $\Domain$  is 
$
\strA \ := \ (\Domain;\  S_1^{\strA}, \dots , S_n^{\strA} ),
$
where $S_i^{\strA}$ is an $r_i$-ary relation  called the \emph{interpretation} of  $S_i$.
If $\strA$ is a $\tau$-structure,  $\strA|_{\sigma}$ denotes
its restriction to a  sub-vocabulary $\sigma$. We use  $\Un$ to denote the set of all
$\tau$-structures over the same domain $\Domain$.

Let $C$ be a class of first-order structures of some vocabulary $\tau$. An $r$-ary \emph{$C$-global relation $Q$} (a \emph{query}) assigns to each structure $\strA$ in $C$ an $r$-ary
	relation $Q(\strA)$ on $\strA$; the relation $Q(\strA)$ is the specialization of $Q$ to $\strA$. The vocabulary $\tau$ is
	the vocabulary  of $Q$. If $C$ is the class of all  $\tau$-structures, we say that $Q$ is
	\emph{$\tau$-global}
\cite{Gurevich-algebra-feasible,Gurevich-challenge}.
Let $\mathcal{L}$ be a logic. A $k$-ary query $Q$ on $C$ is $\mathcal{L}$-definable if there is
an $\mathcal{L}$-formula $\psi(x_1, \dots , x_k)$ with $x_1, \dots , x_k$ as
free variables  such that for every $\strA \in  C$
\begin{equation}\label{eq:query-def}
Q(\strA) = \{(a_1, \dots , a_k) \in \Domain^k \mid  \strA \models \psi(a_1, \dots , a_k)\}.
\end{equation}
Query $Q$ is \emph{monadic} if $k=1$.
A \emph{conjunctive query} (CQ) is a query definable by a first-order formula in prenex normal form built
from atomic formulas, $\land$, and $\exists$ only. A relation is   \emph{Primitive Positive (PP) definable} if it is definable by a CQ:
$\forall x_1 \dots  \forall x_k \ \big( R(x_1, \dots  , x_k)  \leftrightarrow  \exists z_1 \dots \exists z_m\ (B_1 \land \cdots \land B_m)\big),$
and each $B_i$ has object variables from 
$x_1, \dots  , x_k, z_1, \dots , z_m$.
\begin{example}\label{ex:Path2}
{\rm	\textbf{Path of Length Two} is PP-definable:
\begin{equation}\label{eq:Path2}
\forall x_1 \forall x_2 \ \big(Z(x_1, x_2) \leftrightarrow \exists z (Y(x_1, z) \land Y(z, x_2))\big).
\end{equation}
} 
\end{example}
\begin{example}\label{ex:Reach2}
	{\rm Relation	\textbf{At Distance Two  from $X$} is monadic and PP-definable:
		\begin{equation}\label{eq:Reach2}
			\forall x_1 \forall x_2 \ \big(Z(x_2) \leftrightarrow \exists z ( X(x_1)\land Y(x_1, z) \land Y(z, x_2))\big).
		\end{equation}
	} 
\end{example}

Let $\Sigma$ be an alphabet.  
A \emph{finite word} is a finite sequence $w: =a_0\cdot a_1\dots a_n$ of letters in $\Sigma$, where the indexes are called \emph{positions}. A \emph{reverse} of $w$ is  $w^{-1}:=b_0\cdot b_1\dots b_n$ where $b_0=a_n$, $b_1=a_{n-1}$, \dots, and $b_n=a_0$.

 We  use notation 
	$w_i$ for the prefix of  $w$ ending in position $i\in \nat$.  The \emph{length}  of $w$ is $|w|:= n+1$. Let $\last (w):= |w| - 1$.  We write  $w_{\last}$ to denote $w_{\last(w)}$. Clearly, $w_{\last} =w$.  The empty word, i.e., such that $|w| = 0$, is denoted $\mathbf{e}$. 
The \emph{positions} in word $w$ are denoted as  $i,j\in \nat$. We use $w(i)$ for the $i$-th letter in word $w$. A word  such that  $w(0)=a$, where $a\in \Sigma$, is denoted $w(a)$. 
We use $w(i,j)$ to denote the \emph{subword} obtained from $w$ starting from position $i$ and terminating in position $j$, where $0\leq i \leq j \leq \last(w)$.
Later in the paper, $\Sigma$ will be $\Un$, the set of all $\tau$-structures over the same domain, and, in a Path semantics, words will be called \emph{paths}.

 A \emph{tree} over $\Sigma$ is a (finite or infinite) nonempty set $\Tr\subseteq \Sigma^*$ such that for all $x\cdot c\in \Tr$, with $x\in \Sigma^*$ and $c\in \Sigma$, we have $x\in \Tr$.
 The elements of $\Tr$ are called \emph{nodes}, and the empty word $\mathbf{e}$ is the \emph{root} of $\Tr$. For every $x\in \Tr$, the nodes $x\cdot c\in \Tr$ where $c\in \Sigma$ are the \emph{children} of $x$.  A node with no children is a \emph{leaf}. We refer to the length $|x|$ of $x$ as its \emph{level} in the tree. A \emph{branch} $b$ of a tree $\Tr$ is a set $b\subseteq \Tr$ such that $\mathbf{e} \in b$ and for every $x\in b$, either $x$  is a leaf, or there exists a unique $c\in \Sigma$ such that $x\cdot c \in b$. 
An $|\Sigma|$-ary tree is \emph{full} if each node is either a leaf or has exactly $|\Sigma|$ child nodes. It is \emph{Full} if it has no leaves. In a Full tree, every branch is infinite.

\vspace{-1ex}

	\newcounter{inlineequation}
\setcounter{inlineequation}{0}
\renewcommand{\theinlineequation}{(\Roman{inlineequation})}

\newcommand{\inlineeq}[1]{\refstepcounter{inlineequation}\theinlineequation\ \(#1\)}

\let\sse=\subseteq
\let\vf=\varphi
\def\vc#1#2{#1 _1\zd #1 _{#2}}
\def\zd{,\ldots,}

\colorlet{shadecolor}{gray!30}

\tikzset{My Style/.style={black, draw=shadecolor,fill=shadecolor, minimum size=0.1cm}}

	\begin{figure}[H]
		
\begin{tikzpicture}[
grow= right,
level distance=1.2cm,
level 1/.style={sibling distance=1.0cm},
level 2/.style={sibling distance= 0.6cm},
level 3/.style={sibling distance= 0.6cm},
level 4/.style={sibling distance= 0.58cm},
level 5/.style={sibling distance= 0.38cm}]

\tikzset{level 1 onwards/.style={level distance=0.2cm}}
\tikzset{level 2 onwards/.style={level distance=0.2cm}}
\tikzset{level 3 onwards/.style={level distance=0.4cm}}

\node (Root)(z) {$\emp$}
child {
	node (a)[left=0.2cm] {$\strB$} 
	child { node (b)[left=3mm] {} edge from parent[dashed] 
		node[right,draw=none]{} 
	}
	child { node (c)[left=3mm] {} edge from parent[dashed] 
		node[right,draw=none]{} }
	edge from parent node[left,draw=none] {}
}
child {
	node (d) [left=0.2cm] 
	{$\strA$} 
	child { node (e) {$\strA\cdot\strB$}
		child { node(f) [right=0.2mm]
			 {$\strA\cdot\strB\cdot\strB$}
			child { node(g)[right=1mm] {$\strA\cdot\strB\cdot\strB\cdot\strB$} 
				child { node(kk)[right=2mm] {} edge from parent[dashed]
					node[right,draw=none]{}}
				child { node(ll)[right=2mm] {} edge from parent[dashed]
					node[left,draw=none]{}}
				edge from parent node[right,draw=none]{}}
			child { node(h)[right=1mm][My Style] {$\strA\cdot\strB\cdot\strB\cdot\strA$}
						child { node(k)[right=2mm] {} edge from parent[dashed]
				node[right,draw=none]{}}
			child { node(l)[right=2mm] {} edge from parent[dashed]
				node[left,draw=none]{}}
				edge from parent node[left,draw=none]{} }
			edge from parent
			node[left,draw=none]{}}
		child { node(x)[right=0.2mm] {$\strA\cdot \strB\cdot\strA$} 
			child { node(m) {} edge from parent[dashed]
				node[right,draw=none]{}}
			child { node(q) {} edge from parent[dashed]
				node[left,draw=none]{}}
			edge from parent node[right,draw=none]{}}
	}
	child { node (o){$\strA\cdot \strA$}
		child { node(p)[left=1mm] {} edge from parent[dashed]
			node[right,draw=none]{}}
		child { node (q)[left=1mm] {} edge from parent[dashed] 
			node[right,draw=none]{} }
	}
};

\tikzset{every loop/.style={min distance=10mm,in=160,out=120,looseness=10}}

\tikzset{every loop/.style={min distance=10mm,in=160,out=120,looseness=10}}

\draw[ultra thick,gray] (z)--(d);
\draw[ultra thick,gray] (d)--(e);
\draw[ultra thick,gray] (e)--(f);
\draw[ultra thick,gray] (f)--(h);

\end{tikzpicture}
\caption{A Full tree over $\Un$. The nodes in this tree intuitively correspond to traces of  programs (or proofs). A concrete unconstrained dynamic Choice function $\ch: \Tr \to \Tr$ maps, for example,  $\strA\cdot\strB \mapsto \strA\cdot\strB\cdot\strB$.  
		For each such function $\ch$, there is a unique branch $b$ obtained as follows: $
	b(\ch)  := \{ \emp, \ \ \emp \cdot \ch(\emp),\  \ \emp \cdot \ch(\emp) \cdot \ch(\ch(\emp)),\ \dots  \}$. An example of a branch is traced by the thicker lines.  Each branch $b(\ch)$  determines an equivalence class $[\ch]$ of Choice functions, where   $\ch \cong g  \ \Leftrightarrow \ b(\ch) = b(g)$.  The shaded node is a path, i.e., a finite word $\ppath \subseteq \Un^+$. It is   an  \emph{$\ch$-path}  because  $\ppath \in b(\ch)$.}
\label{fig:tree}
\end{figure}
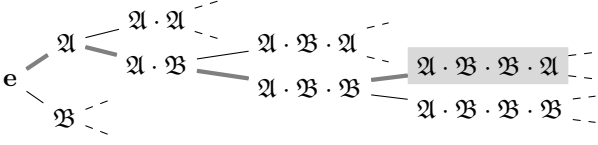

Recall that $\Un$ is the set of all $\tau$-structures over the same domain. Let $\Tr$ be a Full tree over $\Un$. 
We define an \emph{unconstrained} Dynamic Choice function $\ch: \Tr \to \Tr$, with the goal of constraining it with an algebraic term later. 
Function $\ch$, given a node $x$, returns one of its children $x\cdot c$, where $x\cdot c \in \Tr$. The set of all choice functions over $\Un$ is denoted $\CH(\Un)$.
Observe that  for every unconstrained Choice function $\ch$, there is a unique branch $b$ obtained as follows: $
b(\ch)  := \{ \emp, \ \ \emp \cdot \ch(\emp),\  \ \emp \cdot \ch(\emp) \cdot \ch(\ch(\emp)),\ \dots  \}$. This is because, at each level of the Full tree, $\ch$ makes a unique choice.
However, Choice functions are defined for every node in $\Tr$, not just those reachable through a repeated application of $\ch$ from the root $\mathbf{e}$ of $\Tr$. For each branch, there are infinitely many unconstrained Choice functions that agree on that branch, but act differently  on nodes not in that branch. 
We consider such functions equivalent if and only if the corresponding branches are  the same: $\ch \cong g  \ \Leftrightarrow \ b(\ch) = b(g)$. Each branch $b(\ch)$  determines an equivalence class  of Choice functions.

Let a  \emph{path} be  a finite word $\ppath \subseteq \Un^+$. 
\begin{definition} \label{def:h-path} We call a path $\ppath$ an  \emph{$\ch$-path}  if $\ppath \in b(\ch)$.  
	\end{definition}

\vspace{2ex}

\section{An Algebra of Binary Relations: Syntax and Atomic Transitions}
\label{sec:Algebra-syntax}

In this section, first, we introduce the syntax of an algebra of binary relations. Second, we semantically formalize the important inertia preservation principle for \emph{atomic} transductions. This principle has first appeared under the name of the commonsense law of inertia \cite{McCHay69}. Third, we introduce a specific kind of such transductions that we use in this paper. The trasductions are a singleton-set modification of monadic conjunctive queries. Such queries have multiple outcomes because only one (arbitrary) domain element is included in each output. Finally, we give an intuition about the underlying machine model.

\subsection{Algebraic Expressions}

We give an algebraic syntax of our logic first. We call this syntax \emph{one-sorted} to contrast it with the two-sorted one in the form of a  modal Dynamic logic, presented in Section \ref{sec:dynamic-logic}, where we have sorts for state formulae and processes.
 Algebraic expressions are given by the following grammar:
\begin{equation}
\label{eq:one-sorted}
 \begin{array}{c}	
 \ \ \ 	\alpha  :: =   
	\id  \mid \varepsilon M (\bZ)  \mid    \rneg\alpha  \mid    \lneg \alpha \mid  
	\alpha\comp \alpha  \mid \\
	\alpha  \sqcup \alpha \mid    \sigma^r_{X=Y} (\alpha) \mid    \alpha ^\iter  \mid   \alpha^\converse .
\end{array}
\end{equation}
Here, $M$ is any module name in $\Sch{}$,  $\bZ$ is a tuple of variables; and $X,Y$ are variables. For \emph{atomic module expressions}, i.e., expressions of the form $M(\bZ)$, the length of $\bZ$ must equal $\ar(M)$. In practice, we will often write $M (\underline{\bX};\bY) $ for atomic module expressions, where $\bX$ is a tuple of variables of length $\iar(M)$ and $\bY$ is a tuple of variables of length $\ar(M)- \iar(M)$.
These variables are relational.
In addition, we have: Identity (Diagonal) ($\id$), atomic module symbols preceded by  Choice  operator ($\varepsilon M(\bZ)$), where $\varepsilon$  denotes a free Choice function variable, Sequential Composition ($ \comp$), Forward-facing ($\rneg  $) and Backwards-facing ($\lneg$ ) Unary Negations, also called Anti-Domain and Anti-Image, respectively, Preferential Union ($\sqcup$), which is a restriction of Union, Right Selection $\sigma^r_{X=Y}$ which amounts to equality check for the two variables on the right of a binary relation specified by $\alpha$,
Maximum Iterate ($^\iter$), which is a restriction of  transitive closure, and Converse $\alpha^\converse$.\footnote{In (\ref{eq:one-sorted}), Right Selection $\sigma^r_{X=Y}(\alpha)$ can be replaced with \emph{equality test}, i.e., $\sigma^r_{X=Y}(\id)$, which could be written as just $X=Y$.}

Our algebra  has a  \textbf{two-level syntax}: in addition to the algebraic expressions (\ref{eq:one-sorted}), atomic transductions provide a transition system context for the algebra. Each atomic transduction is specified as presented in Sections \ref{sec:inertia}   and \ref{sec:SM-PP}.  The detailed logical formalization is needed to make our proofs fully formal. However, at a first reading, we recommend skipping these two sections to go directly to Section \ref{sec:machine-model}.

\subsection{Semantic Context:  Atomic Transitions}
\label{sec:inertia}
Next, we formalize atomic transductions semantically, while taking into account inertia preservation.

	A \emph{semantic context} $\inst{}:= \inst(s,\Domain,S(\cdot))$ is a function that maps each atomic symbol $M$ in $\Sch{}$ to a binary relation (here denoted $\sem{\cdot}$) on the set $\Un$ of all $\tau$-structures over the domain $\Domain$. This, a semantic context $\inst{}$ gives a transition relation on $\Un$.
	The function $\inst{}$ is parameterized with 
	\begin{compactitem}
	\item a variable assignment  (an instantiation function)  $s : \vuni \to \tau$, where $\tau$ is a relational vocabulary;
	\item a base set $\Domain$, which is, in this paper, the domain of the input structure;
	\item a \emph{static} interpretation $S(M)$ of module names $M$ in $\Sch{}$.
For each atomic module name $M$ in $\Sch{}$ and tuples of variables $\bZ$ and $\bX\subseteq \bZ$, with  $|\bZ| = \ar{(M)}$ and  $|\bX| = \iar{(M)}$,  static interpretation $S(M)$ is a
function $S$ that returns  
  a set of  $s(\bZ)$-structures ($s(\bZ) \subseteq \tau$) with domain $\Domain$, where an \emph{interpretation of $s(\bX)$  is expanded to the entire vocabulary  $s(\bZ)$ to satisfy a specification in a logic $\cL_T$ to be specified in Section \ref{sec:SM-PP}}.\footnote{We abuse the notations slightly by writing $s(\bZ)$ to denote the set $\{s(Z_1), s(Z_2)\dots\}$, where $Z_1,Z_2,\dots$ are variables in $\bZ$.} 
\end{compactitem}

To finish defining semantic context $\inst{}:= \inst(s,\Domain,S(\cdot))$, we reinterpret each static interpretation dynamically, that is, as a \textbf{transduction} represented by a \textbf{dynamic relation (DR)}, i.e., a binary relation on structures. 
  For \emph{atomic} DRs, we require that, relations, that are not explicitly modified by the expansion from $s(\bX)$  to  $s(\bZ)$, remain the same (are transferred by inertia). This includes interpretations of unconstrained variables that are interpreted arbitrarily. 
Taking into account this intuition, we present atomic DRs formally, as follows.
Static interpretation $S(M)$, for each atomic module name $M$,  gives rise to a binary relation $\inst(M)$ (here denoted $\sem{\cdot}$) on $\Un$ defined as follows:
$$
\begin{array}{l}
\hspace{-5mm} \sem{M(\underline{\bX};\bY)}    : =  \big\{ (  \strA ,  \strB) \in  \Un \! \times\!  \Un  \mid 

\mbox{  exists }  
\strC \in S(M) \mbox{ and  } 
\end{array}
$$ 	
\vspace{-4mm}
\begin{equation}\label{eq:semantics-of-atomic-modules}
	\begin{array}{c}	
		\strC|_{ s(\bX) } =  \strA|_{  s(\bX) }, \\ 
		\strC|_{ s(\bY) } =  \strB|_{  s(\bY)  }  
		\\  \mbox{ and }  \  \strA|_{\tau\setminus s(\bY) } =  \strB|_{\tau\setminus  s(\bY) }  
		\big\}.
	\end{array}
\end{equation}

\noindent  Intuitively, the dynamic semantics of an expression $M(\underline{\bX};\bY)$ represents a transition from $\strA$ to $\strB$.  The inputs of the module are ``read'' in $\strA$ and the outputs are updated in $\strB$. Relations unaffected by the update are copied from  $\strA$ to $\strB$, as formalized by the last line of (\ref{eq:semantics-of-atomic-modules}).

The  interpretation of atomic elements as binary relations on structures means that the algebra (\ref{eq:one-sorted}) is an \emph{algebra of transductions} or structure-rewriting operations.

\subsection{Singleton-Set Monadic  PP-definable (SM-PP) Atomic Transductions}

Next, we specialize the logic $\cL_T$ for axiomatizing expansions from $s(\bX)$ to  $s(\bZ)$.
This logic gives us another parameter to control the expressive power, in addition to the algebra (\ref{eq:one-sorted}).
Our goal is to  restrict atomic transductions so that the only relations that change from structure to structure are unary and contain one domain element at a time, similarly to  registers in a Register Machine \cite{ShepherdsonSturgis:1961}. We take a \emph{monadic} primitive positive relation (i.e., a relation definable by a conjunctive query, see Technical Preliminaries), and output only a \emph{single element contained in the relation instead of the whole relation}.

\begin{definition}\label{def:SM_PP}\rm
	\emph{Singleton-Monadic  Primitive Positive ({\rm SM-PP}) relation} is a \textbf{singleton set} relation $R$ implicitly definable by:
	\begin{equation}\label{eq:SM-PP}
	\begin{array}{c}
	\forall x \ \big(  
	R(x) \to  \underbrace{\exists z_1 \dots \exists z_m\ (B_1 \land \cdots \land B_m)}_{\sf Conjunctive\   Query}\big) \big)\\
	\land \ \ 	\forall x	\forall y \ \big(  
	R(x) \land 	R(y)  \to x=y \big),
	\end{array}
	\end{equation}
	where each $B_i$ is a relational variable from $\vuni$ with object variables from 
	$x, z_1, \dots , z_m$ such that it is either (1) unary or (2) the interpretation of the vocabulary symbol $s(B_i)$ comes from the input structure $\strA$ (i.e., it is EDB in the database terminology). 
We use a rule-based notation for (\ref{eq:SM-PP}):
	\begin{equation}\label{eq:SM-PP-rule}
	\ \ \ \ \ \ \ \ \ \ \ \ \ R(x) \rul B_1,  \dots, B_m.
	\end{equation}
\end{definition}
\noindent Notation $\rul$, unlike Datalog's $\leftarrow$, is used to emphasize that only one element is put into the relation in the head.

\begin{definition}	
	A \emph{{\rm SM-PP} atomic transduction (or atomic module)} is a set of rules of the form (\ref{eq:SM-PP-rule}). 
	\end{definition}
\noindent Thus, SM-PP transductions have:

	\vspace{1ex}

\hspace{4ex}	\textbf{Outputs:} SM-PP definable relations,
	
\hspace{4ex}	\textbf{Inputs:} monadic or given by the input structure $\strA$.

\vspace{2ex}

\begin{example}{\rm A non-example of SM-PP atomic transduction is a computation of path of length two from  Example \ref{ex:Path2}. The relation  on the output is neither monadic nor singleton-set.
} 
	\end{example}

\begin{example}\label{ex:Reach2-rule}
	{\rm  Consider the monadic PP-definable by (\ref{eq:Reach2}) relation $Z$ from Example \ref{ex:Reach2}, and modify that expression as in Definition \ref{def:SM_PP}. We obtain, in the rule-based syntax (\ref{eq:SM-PP-rule}), an SM-PP atomic transduction: 	
	\begin{equation}\label{eq:Reach2-rule}
	\defin{Z(x_2) \rul X(x_1), Y(x_1, z) , Y(z, x_2))}.
\end{equation}
} 
\end{example}

\begin{example}\label{ex:Reach2-cont}\rm Let $M_{R2}$ in  module vocabulary $\Sch{}$  be the name of the transduction axiomatized by (\ref{eq:Reach2-rule}) from Example \ref{ex:Reach2-rule}. Let  $\tau$ contains predicate symbols $E$, $S$ and $R2$, among other symbols, and the  instantiation function $s$ maps relational variables $X$, $Y$ and $Z$ to $S$, $E$ and $R2$, respectively.
	Then static interpretation $S(M_{R2})$ is a set of all $s(X,Y,Z)$-structures over some domain $\Domain$   that are the models of the expression (\ref{eq:Reach2-rule}).

	Now, suppose, according to the module \schema  $\Sch{}$, we have: $M_{R2}(\underline{X},\underline{Y},Z)$, where the variables $X$ and $Y$ in the input positions are underlined.  
	Then by (\ref{eq:semantics-of-atomic-modules}), we have a binary relation on $\tau$-structures, with $s(X,Y,Z)\subseteq \tau$.   Edges $E = s(Y)$ and the start node $S =s(X)$ are ``read'' in  a $\tau$-structure $\strA$ on the left, and the set of reachable in two steps  nodes $R2= s(Z)$ is updated in $\strB$ on the right, see Figure \ref{fig:MS-state-change}.

	\begin{minipage}{.22\textwidth}
		\begin{figure}[H]
			\hspace{3ex}
			\begin{tikzpicture}[thick,transform canvas={scale=0.7}]
			\node (R1A) {};
			\path (R1A)+(0.5,-3) node (R1B) {};
			\draw [fill=black!30] (R1A) rectangle (R1B);
			
			\path (R1A)+(0,-1) node (R1sA) {};
			\path (R1sA)+(0.5,-.9) node (R1sB) {};
			\draw [fill=black!10] (R1sA) rectangle (R1sB);
			
			\path (R1A)+(3,0) node (R2A) {};
			\path (R1B)+(3,0) node (R2B) {};
			\draw [fill=black!30] (R2A) rectangle (R2B);
			
			\path (R1sA)+(3,-0.5) node (R2eA) {};
			\path (R1sB)+(3,-0.5) node (R2eB) {};
			\draw [fill=black!10] (R2eA) rectangle (R2eB);
			
			\path (R1sA)+(0.5,0) edge[-,dashed] node[sloped,anchor=center,below,yshift=-0.15cm]{$M_{P2}$} (R2eA);
			\path (R2eA)+(0,-0.9) edge[-,dashed] (R1sB);
			
			\path (R1sA)+(0,-0.25) node[right,xshift=-0.06cm] {$E$};
			\path (R1sA)+(0,-0.25) node[right,yshift=-0.4cm] {$S$};
			\path (R2eB)+(0,0.25) node[left,xshift=0.09cm,yshift=0.15cm] {$R2$};
			
			\draw [decorate,decoration={brace,amplitude=10pt},xshift=-4pt,yshift=0pt] (0,-3) -- (0,0) node [black,midway,xshift=-0.5cm] {$\tau$};
			\draw [decorate,decoration={brace,amplitude=10pt},xshift=4pt,yshift=0pt] (3.5,0) -- (3.5,-3) node [black,midway,xshift=0.5cm] {$\tau$};
			
			\path (0.25,0) node [above] {$\strA$};
			\path (3.25,0) node [above] {$\strB$};
			\end{tikzpicture}
			\vspace{20mm}	
			\caption{An illustration of  (\ref{eq:semantics-of-atomic-modules}).	}\label{fig:MS-state-change}
		\end{figure}
	\end{minipage} \hfill
	\begin{minipage}{0.22\textwidth}
			By the last line of (\ref{eq:semantics-of-atomic-modules}), the interpretations of all symbols except the updated $R2$ are preserved in the transition from $\strA$ to $\strB$. 
	\end{minipage}	
\end{example}

\vspace{2ex}

\begin{example}\label{ex:same-generation-atomic}{\rm The following atomic transduction is SM-PP:
	\begin{equation}\label{ex:rules-simult-reach}
	\begin{array}{l}
	\hspace{-2mm}  
  \defin{  \Reach_A'(x ) \rul \underline{\Reach_A(y)}, \underline{E(x,y)},\\
		\Reach_B'(v ) \rul \underline{\Reach_B(w)}, \underline{E(v,w)} }.
	\end{array}
	\end{equation}
Here, again, we underline the variables in  $\vuni$ that appear in the input positions. We also use more informative names for the relational variables, e.g., $\Reach_A$ instead of, say, $X,Y$ or $Z$, etc. It will be our practice from now on. This module is, in fact, a part of Same Generation example presented in Section \ref{sec:Examples}. 
} 
\end{example}

\subsection{Machine Model}\label{sec:machine-model}

In the context of SM-PP atomic transductions, 
we can think of evaluations of algebraic expressions (\ref{eq:one-sorted}) as  \textbf{computations  of Register machines starting from input $\strA$}. 
Importantly, we are interested in \textbf{isomorphism-invariant} computations of these machines, i.e., those that do not distinguish between isomorphic input  structures, as required by Definition \ref{def:logic} of a proper Logic.
Intuitively, we have: 
\begin{itemize}
	\item monadic ``registers'' -- predicates used during the computation, each containing only one element at a time, e.g.,\\ $\Reach_A(y),\ \ \Reach'_A(x),\ \ \Reach_B(w),\ \ \Reach'_B(v) $;
	
	\item  the ``real'' inputs, e.g., ${\bf E}(x,y)$, are of any arity;
\item atomic transitions correspond to conditional assignments with a non-deterministic outcome;
\item in each atomic step, only the registers of the current state or the input structure are accessible;	
		\item  the Choice operator $\varepsilon$, depending on the history,
	chooses one of the possible outputs, e.g., an interpretation 
	$({\Reach'_A}^{\strA_{i+1}},{\Reach'_B}^{\strA_{i+1}})$ of ``registers'' $\Reach'_A$, $\Reach'_B$ in the next state ${\strA_{i+1}}$ in the Example \ref{ex:same-generation-atomic} (formally, this operator is defined in Section \ref{sec:Algebra-semantics});	
	\item computations are  controlled by  algebraic terms that, intuitively, represent programs.
\end{itemize}
A related notion of a Register Kripke structure will be given in Section \ref{sec:register-structures}.
In Section \ref{sec:programming-constructs}, we shall see that the main programming constructs are definable. Thus, programs for our machines are very close to standard Imperative programs. In Section \ref{sec:Examples}, we give examples of ``programming'' in this model.

\label{sec:SM-PP}

\vspace{2ex}

\section{Semantics of the Algebra}
\label{sec:Algebra-semantics}

In this section, we introduce a semantics for the algebraic operations  and define a binary satisfaction relation for algebraic terms.

\subsection{Global Dynamic Relations}

We define the  semantics of (\ref{eq:one-sorted}) as an algebra of binary relations within a semantic context $\inst{}$ (see (\ref{eq:semantics-of-atomic-modules})), which gives interpretations to atomic transitions.  In addition, the semantics is conditional on a specific instantiation $\ch$ of the free Choice variable $\varepsilon$.
Recall, from Section \ref {sec:inertia}, that $\bbar$ stands for Dynamic Relation.
The semantics of an expression $\alpha$ is given as a $\bbar$ which will be
denoted by $\sema\alpha$.  We, again, adapt
 Gurevich's terminology
\cite{Gurevich-algebra-feasible,Gurevich-challenge}, as we did in Section \ref{sec:Preliminaries} (Technical Preliminaries). We say that every algebraic  expression $\alpha$
denotes a \emph{\gbrv} $\semNoI{\alpha}$: a function that maps a binary interpretation $\inst{}$  of
$\Sch$ and an instantiation $\ch$ of $\varepsilon$ to the \bbar $\alpha(\inst{},\ch) := \sema{\alpha}$.\footnote{We can also view algebraic expressions as  binary queries on labelled chains.}
Later on, we will see a way of turning algebraic expressions into boolean queries. Intuitively,  such a query will ask if there is a successful execution of $\alpha$ in the input structure, or  whether $\alpha$ ``accepts'' that input. This way, algebraic expressions will be used to represent computational problems viewed as isomorphism-closed classes of structures.  Choice functions (or, equivalently, traces of computation, see Figure \ref{fig:tree} for an intuition) will serve as witnesses for the membership in such problems. 

\subsection{Semantics of Function-Preserving Fragment}
We now go back to the semantics of the algebra (\ref{eq:one-sorted}).
First, we define the semantics of a \textbf{Function-Preserving Fragment}, which is (\ref{eq:one-sorted}) without $\e$ and Converse. The name comes from the fact that all operations in this fragment preserve functions. 
That is, if atomic relations are functional, then so are the algebraic expressions in this fragment. By defining the operations this way, we ensure that they do not ``add guessing''. However, atomic modules are allowed to be general relations. Moreover, notice that we do not include intersection that may potentially \emph{reduce} the expressive power of the components due to confluence.

The meaning of the subscript $\ch$ used below will become clear when we define the semantics of the  free Choice variable $\varepsilon$. As we mentioned before, function $\ch$ will be a specific semantic instantiation of this variable.  Choice plays no role in the Function-Preserving Fragment (since the fragment does not have the $\varepsilon$ operator).

Atomic module names $M$ in $\Sch{}$ are interpreted  as in (\ref{eq:semantics-of-atomic-modules}), i.e., as binary relations on structures $
\sema{ M (\underline{\bX};\bY) }     \subseteq  \Un \times  \Un  
$.

\noindent We now extend the binary interpretation $\sema{\cdot}$  to all 
 expressions, starting with the operations of the Function-Preserving fragment. 

\vspace{2ex}

 \noindent{\bf Identity (Diagonal)}
$
\begin{array}{l}
  \sema{\id}    : =   \{ (  \strA ,  \strB) 
  \in  \Un \times  \Un  \mid \strA =  \strB  \} .
\end{array}
$
Operation $\id$ is sometimes called the ``$nil$'' action. It is the empty word  in the formal language theory \cite{HopcroftUllman:1979}.

 \vspace{2ex}

\noindent{\bf Right Selection (Equality)} $
 \sema{ \sigma^r_{ X \eq  Y} (\alpha) }  : = \{ ( \strA ,  \strB ) \in  \Un \times  \Un  \mid
  ( \strA ,  \strB ) \in
 \sema{ \alpha  } \mbox{ and } (s(X))^\strB =  (s(Y))^\strB
 \}.
 $

 \vspace{2ex}

\noindent{\bf Sequential Composition} $
 \sema{  \alpha  \comp  \beta  }  : = \{ (  \strA ,  \strB) \in  \Un \times  \Un  \mid  
 \exists  \strC\  (( \strA ,  \strC ) \in \sema{  \alpha }
 \mbox{ and }   (  \strC ,  \strB ) \in \sema{ \beta  }   )  \}.
 $
 
 \vspace{2ex}

\noindent{\bf Forward Unary Negation (Anti-Domain)}
Note that regular complementation includes all possible transitions 
except $\alpha$. We introduce a stronger negation which 
is essentially unary (expressed as binary with two identical elements in the pair), and excludes states where $\alpha$ originates.\footnote{Ideally, these operations should be denoted with a notation closer to the typical negation sign, i.e., more ``square'',  however we could not find a suitable pair of Latex symbols. Any suggestions for a better solution are welcome.} 
It says ``there is no outgoing $\alpha$-transition''.  The semantics is:

$
\sema{  \rneg \alpha  }  : = \{ (  \strB ,  \strB)  \in  \Un \times  \Un  \mid   \ \forall  \strB'\ \  (  \strB ,  \strB') \not \in \sema{ \alpha } \}.
$

\vspace{2ex}

\noindent{\bf Backwards Unary Negation (Anti-Image)}
A similar operation for the opposite direction says ``there is no incoming $\alpha$-transition''. Thus,
$
\sema{  \lneg\alpha  } : = \{ (  \strB ,  \strB)   \in  \Un \times  \Un  \mid   \ \forall  \strB'\ \  (  \strB' ,  \strB) \not \in \sema{ \alpha } \}.
$
Each of the unary negations is a restriction of the regular negation (complementation). 
 Unlike regular negation, these unary operations preserve the property of being functional.
The connectives $\rneg$\ and $\lneg$\  have the properties of the Intuitionistic negation: $\rneg\rneg \rneg \alpha = \rneg \alpha$, $\rneg\rneg \alpha \neq \alpha$, and the same holds for the other direction.

\vspace{2ex}

\noindent{\bf Maximum Iterate}
Maximum Iterate is a determinization of  the Kleene star $\alpha^*$  (reflexive transitive closure). It 
outputs only the longest, in terms of the number of $\alpha$-steps, transition out of all possible transitions produced 
by the Kleene star. 
We define it as: 
$\alpha ^\iter  =  \bigcup_{0\leq n < \omega } (\alpha_n \comp \rneg \alpha ) $,
 where
$\alpha_0 :=  \id$,  and 
$\alpha_{n+1}:=  \alpha \comp \alpha_n$.\footnote{For readers familiar with the   least fixed point operator $\mu Z.\phi $, we mention  that Maximum Iterate is definable as:
$
\begin{array}{l}
\alpha ^{\iter}\ :=\ \mu Z. ( \rneg \alpha \cup \alpha \comp Z).
\end{array}$ From this definition, it is clear that infinite ``looping'' is excluded by the base case.}

\vspace{2ex}

\noindent{\bf Preferential union}
$$
\begin{array}{cc}
\sema{ \alpha \sqcup \beta  } : =  \Big\{ (  \strA ,  \strB)  \in  \Un \times  \Un     & \\ & \hspace{-3.8cm} \left| 
\begin{array}{cl}
(  \strA ,  \strB) \in \sema{ \alpha  }     &\mbox{if } (\strA,\strA) \in \sema{ \Dom(\alpha)   },  \\
(  \strA ,  \strB) \in \sema{ \beta  }   & \mbox{if }    (\strA,\strA) \not\in \sema{ \Dom(\alpha)   }  \\   {\ \ \ \ \ \ \ \ \ \ \ \ \ \ \ } &\mbox{and  }  (\strA,\strA) \in \sema{ \Dom(\beta)   }
\end{array} \! \! \! \right\},
\end{array}
$$ 
where $\Dom(\alpha):= \rneg \rneg \alpha$ is a projection onto the left element of the binary relation. Thus, we perform $\alpha$ if it is defined, otherwise we perform $\beta$. This operation is a determinization of Union.\footnote{The connection with Union (a construct not included in the language) is as follows: $
	\alpha \sqcup \beta \ = \ \alpha \cup ( \rneg \alpha  \comp \beta).
	$ }

 \subsection{Semantics of Epsilon and Converse}
Having defined the semantics of the Function-Preserving Fragment, we now define the semantics of the full algebra (\ref{eq:one-sorted}). For  that,  we introduce the semantics of the free Choice variable $\varepsilon$ and Converse.

\vspace{1ex}

 \noindent{\bf  Dynamic Version of Hilbert's Choice Operator Epsilon }
Let $\Tr$ be a Full tree over $\Un$, and $\ch$ be an unconstrained specific Choice function  $\ch: \Tr \to \Tr$.  As we already mentioned,  $\ch$ is a parameter of the semantics that affects  the result of the evaluation of an  expression. It serves as an instantiation of the free function variable $\varepsilon$. Thus, we have:
\begin{equation} \label{eq:epsilon-semantics}
\begin{array}{l}
\hspace{-3mm}
\sema{ \varepsilon M(\underline{\bX};\bY)  }    : =  \\ \hspace{-3mm} \big\{   
\big(  w(\last),  w'(last)\big) \in \sem{  M(\underline{\bX};\bY) }  \  \mid \  w'=\ch(w),\\
\hspace{28ex} \mbox { and } w, w' \in b(\ch)  \big\}.
\end{array}
\end{equation}		
\noindent By this definition, the Epsilon operator, for each $\tau$-structure $\strA = w(\last)$ where $M$ is defined, arbitrarily selects \emph{precisely one} structure $\strB = w'(\last)$ out of all possible structures that are $M$-successors of $\strA$, as prescribed by  $\ch$  on the branch $b(\ch)$. Notice that $\ch$-choices outside $b(\ch)$ are not included. Branch $b(\ch)$ will also be used in the connection with a Path semantics defined in Section \ref{sec:Path-Semantics}.
For example, in Figure \ref{fig:tree}, $\ch: w \mapsto w'$, where $w=\strA\cdot\strB\cdot\strB$, $w'=\strA\cdot\strB\cdot\strB\cdot\strA$,  $w(last)=\strB$ and $w'(last)=\strA$. If  $(\strB,\strA)\in \sem{  M(\underline{\bX};\bY) }$, then $(\strB,\strA)\in \sema{ \e M(\underline{\bX};\bY) }$, since $w,w' \in b(\ch)$.

\vspace{2ex}

\noindent \textbf{The semantics of Converse} For the atomic case of \  $^\converse$:
$$
\hspace{-3.5mm}\begin{array}{l}
\sema{ \varepsilon M(\underline{\bX};\bY)^\converse  }    : =  \big\{ (  \strB ,  \strA)\!  \in \! \UnT \! \times\!  \UnT  \mid   
(  \strA ,  \strB)\! \in\!  \sema{ \varepsilon M (\underline{\bX};\bY)} \big\}.
\end{array}
$$
We extend this semantics  to all operators of (\ref{eq:one-sorted}) inductively:
\begin{equation}\label{eq:converse}
\begin{array}{ll}
(	\alpha\comp \beta )^\converse  \ : = \ 	\beta^\converse\comp \alpha^\converse,&
(\rneg \alpha)^\converse  \ : = \ \rneg \alpha,\\ 
(\lneg \alpha)^\converse  \ : = \ \lneg \alpha,&

(\alpha ^\iter)^\converse   \ : = \ \alpha^{\converse \ \iter},\\
(\alpha ^\converse)^\converse  \ : = \ \alpha,&
(\sigma^r_{\Theta} (\alpha))^\converse   \ := \  \sigma^r_{\Theta} (\alpha^\converse).
\end{array}
\end{equation}	
Observe that, since the semantics is parameterized with a  Choice function $\ch$, Converse simply reverses the relation  along the same linear trace.

\subsection{Binary Satisfaction Relation}\label{sec:binary-satisfaction-relation}
Given a  well-formed algebraic term $ \alpha$, we say that pair of structures 
$(\strA, \strB)$, \emph{satisfies} $ \alpha$ under 
 semantic context $\inst{}$ and choice function $\ch$
if and only if $(\strA, \strB ) \in  \sema{ \alpha   }$:
\begin{equation} \label{eq:binary-sat}
(\strA, \strB )\models_\inst{}  \alpha[\ch/\e] \ \ \Leftrightarrow\ \  (\strA, \strB ) \in  \sema{ \alpha   }.
\end{equation}
Here,  the notation $[\ch/\e]$ means a semantic instantiation of the free function variable $\e$ by a specific Choice function $\ch$.

\vspace{1ex}

\paragraph{(Non-)Deterministic Transitions}We sometimes use the term \emph{deterministic transitions}  for functional binary relations. A transition is \emph{non-deterministic} if the corresponding  relation is not functional.

\vspace{2ex}

\section{Two-Way Deterministic Linear Dynamic  Logic (DetLDL$_f$)}\label{sec:Dynamic-Logic}

In this section, starting from the algebra (\ref{eq:one-sorted}), we define an equivalent syntax in the form of a  Dynamic logic. While the two formalizations are equivalent, in many ways, it is much easier to work with the logic than with the algebra, as will already become clear later in this section. In particular, we shall be able to formalize computational problems, and later, in Section \ref{sec:Path-Semantics}, introduce a language-theoretic perspective via an alternative Path semantics for the logic.
In this section, after defining the Dynamic logic, we justify why composition is sufficient to express conjunction. Then, we define a unary satisfaction relation and a query we use over and over again in this paper. The query, intuitively,  checks if there is a successful execution of a process starting from an input structure. We demonstrate that, using the function-preserving process operations of the Dynamic logic, one can define all major constructs of Imperative programming.  Finally, we formalize the notion of a computational problems specified by an algebraic term and discuss two forms of certificates for membership in such problems. 

\subsection{Dynamic Logic}\label{sec:dynamic-logic}
We now provide an alternative (and equivalent) two-sorted version of the syntax  (\ref{eq:one-sorted})  in the form of a modal temporal (dynamic) logic that we call DetLDL$_f$, by analogy with LDL$_f$ \cite{DeGiacomo-Vardi:IJCAI:013}. The syntax is given by the grammar: 
\begin{equation}
\label{eq:LLIF-two-sorted}
\begin{array}{c}
\alpha \ :: =  \  
\id   \mid  \varepsilon M_a (\bZ)  \mid   \\ \rneg \alpha  \mid    \lneg \alpha  \mid 
\alpha\comp \alpha     \mid    \alpha  \sqcup \alpha \mid    \sigma^r_{X=Y} (\alpha) \mid    \alpha ^\iter  \mid \alpha^\converse \mid  \phi?   \\

\phi :: =  \mT \mid  M_p (\bZ)   \mid  \neg  \phi  \mid | \alpha \rangle \ \phi  \   | \ \langle \alpha | \  \phi.
\end{array}
\end{equation}
The subscripts $a$ and $p$ stand for ``actions'' and ``propositions'', respectively. Intuitively, proposition modules $M_p$ make only self-loop transitions.

The expressions in the first two lines of (\ref{eq:LLIF-two-sorted}) are typically called \emph{process} terms, and those in the third line \emph{state} formulae. 
	The idea is that state formulae $\phi$ in the third line of (\ref{eq:LLIF-two-sorted}) are ``unary'' in the same sense as $P(x)$ is a unary notation for $P(x,x)$. Semantically, unary formulae are subsets of the identity relation on $\Un$.
State formulae $|\alpha \rangle \ \phi$ and $\langle \alpha | \  \phi$  are right and left-facing existential modalities. 
The  state formulae in the second line of (\ref{eq:LLIF-two-sorted}) are 
 shorthands that use the operations in the first line: $ \mT : = \id$, $ \neg  \phi : = \rneg \phi$ (or $  \lneg \phi  $), \ $  | \alpha \rangle \ \phi := \rneg \rneg \alpha \comp \phi$, \ \ $  \langle \alpha | \  \phi := \lneg \lneg \phi\comp \alpha$  and  $ \phi ? :=  \rneg \rneg  \phi$.
Recall that we have defined $ \Dom (\alpha) := \ \rneg \rneg \!\! \alpha $, which, intuitively, denotes the domain of $\alpha$. With this notation, we have that $  | \alpha \rangle \ \phi \ = \ \Dom (\alpha \comp \phi)$.

While we do not use a box modality in this paper, we introduce it for completeness of the exposition. We  define $|\alpha ] \ \phi \  : =\  \rneg (\alpha\comp  \rneg\phi)$. Taking into account Weak Idempotence of the Unary Negation, $\rneg \rneg\rneg \!\! \phi = \rneg  \!\! \phi$, we obtain that the two modalities are duals of each other,  $|\alpha \rangle \ \phi \  =\  \rneg |\alpha ] \rneg \phi $.
Note that, since $\ch$ is a parameter of the semantics, the duality is specific to that $\ch$ (or, equivalently, to an $\ch$-path, as will be clear from Path semantics in part (a) of Section \ref{sec:path-semantics}). 
From the that section, \ref{sec:path-semantics}, the universal and existential meaning of the two dual modalities will also be more transparent.

We often write $M_p(\underline{\bX};\ )$ as a particular case of $M_p( \bZ)$, where $\bX \subseteq \bZ$, to indicate that only input variables are checked, and no transition to another state happens. Semantically, we have, for all $\ch$: 
$$
\begin{array}{c}
\sema{  M_p(\underline{\bX}; )  }     =  \sema{ \varepsilon M_p(\bZ)  }     \ := \\
\hspace{-2ex}\big\{   
\big(  w(\last),  w(\last))\big) \in \sem{  M_p(\underline{\bX};\bY) }    \mid    
 \sem{  M_p(\underline{\bX};\bY) } \subseteq \id  \big\}.
\end{array}
$$
Thus, Choice functions do not affect the semantics of modules-propositions. Such modules act as test actions:  $ M_p(\underline{\bX}; ) ? =  \rneg \rneg \! M_p(\underline{\bX}; ) =  M_p(\underline{\bX}; )$.

\subsection{Composition as a Form of Conjunction} Intersection is not included among the operations of the algebra because it ``wastes'' computation already performed by the intersected components.
One may think that this operation is needed to express conjunction of state formulae. This is not the case. 
It has been known for a long time that relational composition, also called relative or dynamic product, is an analog of logical or static conjunction. According to Pratt \cite{Pratt:calc-bin-rel}, ``This view of composition/concatenation as a form of conjunction predates even Peirce and would appear to be due to De Morgan in 1860 \cite{DeMorgan:1860}''. In some Resource logics, Composition is called Multiplicative Conjunction.
It is not hard to check that,   in the unary (state formulae) case,  `$\comp$' and `$\sqcup$'  have the properties of `$\land$' and `$\lor$', respectively. We could have added the case of conjunction to the state formulae of (\ref{eq:LLIF-two-sorted}), but prefer not to, to avoid an extra case in the proofs. Our examples are $\land$-free, but one can add it when needed.

\subsection{Unary Satisfaction Relation and  Main Query}\label{sec:unary-satisfaction}
While, technically, the two-sorted syntax does not add any expressiveness, it provides a considerable usability advantage over the unary one. Indeed, we now have a Dynamic logic where terms, that informally represent programs, can occur inside the possibility and the necessity modalities. The logic allows one to conveniently specify properties of such programs,  for example, to define the set of all \emph{successful} executions \emph{for a specific input}. This leads us to the notion of a unary satisfaction relation.

\vspace{1ex}

\subsubsection{Unary Satisfaction Relation}For state formulae $\phi$, we define: 
\begin{equation} \label{eq:unary-sat}
\strA \models_\inst{}  \phi[\e] \ \ \Leftrightarrow\ \   (\strA, \strA )\models_\inst{}   \phi[\e], 
\end{equation} 
to say that that formula $\phi$ is \emph{true} in structure $\strA$ under some semantic  instantiation  $h\in\CH(\Un)$ of the free function variable $\e$.
 Thus, unary satisfaction relation is a particular case of the binary one, which is defined in Section \ref{sec:binary-satisfaction-relation} by (\ref{eq:binary-sat}).
Here, we attached the (optional) notation $[\e]$ to $\phi$ to emphasize the presence of the free function variable in the formula. We shall continue doing it for clarity, when needed.

According to (\ref{eq:unary-sat}),
state formulae $\phi$ in logic (\ref{eq:LLIF-two-sorted}) are interpreted, for each semantic  instantiation  $h\in\CH(\Un)$ of the free function variable $\e$, by subsets of $\Un$
 where they are satisfied. 
On the other hand,  formula $\phi[\e]$ defines all possible Choice functions $\ch$  that witness $\strA \models_\inst{}   \phi[\e] $, for each input structure $\strA$.

\subsubsection{Main Query} We often write
\begin{equation}\label{eq:models-last}
\strA \models_\inst{} |\alpha\rangle \Last
\end{equation}
to express a boolean query 
checking whether \textbf{there exists a successful execution} of $\alpha$ starting at the input structure $\strA$. Here, $\Last$ is an  expression for the last state of a path: 
$
\Last  :=  \ |any_M] \ \neg \mT,
$
where
$
any_M \ : = \ M_1 \sqcup  \dots  \sqcup   M_l
$
and  $M_1, \dots, M_l$ are all the atomic module names in  $\Sch$.  Inside the right-facing diamond modality $|\alpha\rangle$, we have, essentially, an imperative program.

\subsection{Expressing Programming Constructs}\label{sec:programming-constructs}
It is well-known that in Propositional Dynamic Logic  \cite{DBLP:journals/jcss/FischerL79}, imperative programming  constructs  are definable using a fragment of  regular languages, see the Dynamic Logic book by Harel, Kozen and Tiuryn \cite{HarelKozenTiuryn-DL}. A language that corresponds to the expressions composed of the imperative constructs is called \emph{Deterministic Regular (While) Programs} in  \cite{HarelKozenTiuryn-DL}.\footnote{	Please note that Deterministic Regular expressions and the corresponding Glushkov automata are unrelated to what we study here. In those terms, expression $a \comp a^*$ is Deterministic Regular, while $a^* \comp a$ is not. Both expressions are not in our language.}  Imperative  constructs  are definable in logic DetLDL$_f$ (see (\ref{eq:LLIF-two-sorted}) in Section \ref{sec:dynamic-logic}), as shown below.
$$
\begin{array}{l}
{\bf skip}\ : =\ \id,  \ \ \ \ 
{\bf fail}\ : =\ \rneg \id,\\
{\bf if} \ \phi \ {\bf then}\  \alpha \ {\bf else} \ \beta \ : = \ (\phi? \comp \alpha) \sqcup   \beta,\\

{\bf while} \ \phi \ {\bf do}\  \alpha : = (\phi? \comp \alpha)^\iter  \comp (\rneg \phi?),\\

{\bf repeat} \ \alpha \ {\bf until}\  \phi : = \alpha \comp ((\rneg \phi?) \comp \alpha)^\iter  \comp  \phi?.
\end{array} 	
$$
Thus,  importantly, the full power of regular languages is not needed to define these constructs.

\subsection{Computational Problem $\cP_{\alpha}$ Specified by $\alpha$}
\label{sec:computational-problem} We use (\ref{eq:models-last})  to formalize computational problems.
\begin{definition}	\label{def:comp-problem}
A \emph{computational problem $\cP_{\alpha}$ specified by $\alpha$} is an isomorphism-closed  class of  structures (set if the domain is fixed) for which query (\ref{eq:models-last}) evaluates to \emph{true}.

\begin{equation}\label{eq:comp-problem}
	\cP_{\alpha} \ : = \ \{ \strA \ \ \mid \ \ \strA \ \models_T\  |\alpha\rangle \Last\ [\e]   \}.
\end{equation}

\end{definition}

\noindent Formula $|\alpha\rangle \Last\ [\e]$ in (\ref{eq:comp-problem}) has a free function variable $\e$. As any formula with free variables,  it \emph{defines a relation}, here, a set of concrete Choice functions:

 \begin{equation}\label{eq:choice-witnesses}
		\ \ \    	H^{\alpha}_{\strA}  :=  \{ \ch \in \CH(\Un)	\ \mid \ \strA \models_\inst{} |\alpha\rangle \Last \ [\ch/\e] \},
	\end{equation}

\noindent The relation is a \textbf{set of certificates} for the membership $\strA \in \cP_{\alpha}$. 
For each concrete Choice function $\ch$, there is an associated set of finite $\ch$-paths, introduced in Definition \ref{def:h-path} at the end of the Technical Preliminaries section. For a finite input structure and a function $\ch$, there is a unique maximal $\ch$-path
	\begin{equation}
		\label{eq:string}
		\strA= \strA _0 \cdot \strA _1 \cdot \strA_2 \cdot \strA_3 \cdot \dots \cdot \strA_m = \last(\ppath)
	\end{equation}
that represents the trace of a computation of $\alpha$ with choices specified by $\ch$, as illustrated in Figure \ref{fig:tree}. Thus, we can view, equivalently,  maximal $\ch$-paths as certificates for (\ref{eq:comp-problem}). 
A connection between such certificates for $\alpha$ and those for its sub-expressions will be clear from the correspondence between Binary and Path semantics presented in Section \ref{sec:binary-path-correspondence}.

Our eventual goal is to analyze  witnesses in these two forms --- Choice functions and strings of structures --- in order to detect symmetries among computations. 
In Section \ref{sec:Path-Semantics}, we define a Path semantics for the Dynamic Logic (\ref{eq:LLIF-two-sorted}). The semantics establishes a precise connection between  specific Choice functions $\ch$ and  paths $\ppath$ in the form (\ref{eq:string}), and thus, between the two types of certificates.

\vspace{2ex}

\section{Complexity of the Main Task}
\label{sec:complexity-main-task}

In this section, we define our main computational task as the task of answering the query $\strA \models_\inst{} |\alpha\rangle \Last$ introduced in Section \ref{sec:unary-satisfaction}. We study a specific case of it, when the \emph{Choice function is fixed}. We analyze the data complexity \cite{Vardi82} of this task, where $\alpha$ and $T$ are fixed, and input $\strA$ varies.  We show that it is polynomial-time in the size of the input structure.

\subsection{Main Computational Task}
We define the main task we are going to study as follows.

\vspace{1ex}

\smallskip
\noindent \fbox{\parbox{\dimexpr\linewidth-2\fboxsep-2\fboxrule\relax}{  \textbf{Main Computational Task } {\ } \label{main-task}\\
		\underline{Given:} $\tau$-structure $\strA$, algebraic term $\alpha$, semantic context $\inst{}$  and Choice function $\ch$ \hspace{8ex}

		\vspace{1ex}

		\underline{Decide:}  
		\begin{equation}	\label{eq:main-task}	\strA \models_\inst{} |\alpha\rangle \Last \ [\ch/\e]
		\end{equation} }}

\vspace{1ex}

\noindent Taking into account the formalizations in Section \ref{sec:computational-problem}, it is clear that it the task of checking membership $\ch \in H^{\alpha}_{\strA}$ in the set of witnesses $H^{\alpha}_{\strA}$ is defined in (\ref{eq:choice-witnesses}), i.e., the task of  verifying whether a specific Choice function $\ch$ is a certificate of $\cP_{\alpha}$, the computational problem specified by $\alpha$, as introduced by (\ref{eq:comp-problem}) in Definition \ref{def:comp-problem}.
We study the above task for the logic that is defined as follows.

\begin{definition}\label{def:main-logic}
	\logic (or, simply, $\Lo$) is the logic  (\ref{eq:LLIF-two-sorted}) with {\rm SM-PP} atomic transductions. 
\end{definition}

Recall that PTIME is a class of all problems computable by a deterministic polynomial time Turing machine. 
We now show that the data complexity of the Main Task (\ref{eq:main-task}) for the logic $\Lo$ given by Definition \ref{def:main-logic} above is in PTIME. 

\begin{theorem} \label{th:membership-PTIME}
	For a \emph{fixed} Choice function, the data complexity of the Main Task (\ref{eq:main-task}), where the algebraic term, Choice function and semantic context  are fixed, and input structures vary,  for logic $\Lo$ is in \emph{PTIME}.
\end{theorem}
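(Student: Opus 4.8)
The plan is to unwind the definition of the query $\strA \models_\inst{} |\alpha\rangle \Last\ [\ch/\e]$ and show that, once $\ch$ is fixed, evaluating it amounts to following a single deterministic path through the state space $\Un$, and that this path has polynomial length and each step is polynomial-time computable. First I would bound the relevant state space: although $\Un$ (all $\tau$-structures over $\Domain$) is exponential in $n = |\Domain|$, the SM-PP restriction guarantees that every relation that can change along a computation is monadic and holds at most one domain element at a time. So the states actually reachable from $\strA$ differ from $\strA$ only in the interpretations of the finitely many monadic ``register'' symbols occurring in $\alpha$ (a number depending only on $\alpha$, hence constant in the data-complexity setting); each such register has at most $n+1$ possible values (the $n$ singletons plus the empty set). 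Hence the set of reachable states has size $O(n^{c})$ for a constant $c$ depending only on $\alpha$, and each state is representable in $O(\log n)$ space beyond a fixed description of $\strA$.

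Next I would argue that, with $\ch$ fixed, the evaluation of $\alpha$ from $\strA$ is deterministic and loop-free. The Function-Preserving Fragment operations ($\id$, $\sigma^r_{X=Y}$, $\comp$, $\rneg$, $\lneg$, $\sqcup$, $^\iter$) all preserve functionality, and the $\varepsilon M(\bZ)$ construct, by the semantics in equation~\eqref{eq:epsilon-semantics}, picks exactly one successor as dictated by $\ch$ on the branch $b(\ch)$; Converse merely reverses along the same linear trace. So for a fixed $\ch$ the dynamic relation $\sema{\alpha}$, restricted to pairs whose first component is reachable from $\strA$, is a partial function, and checking $|\alpha\rangle\Last$ reduces to simulating the unique maximal $\ch$-path \eqref{eq:string} starting at $\strA$ and testing whether it successfully terminates (the $\Last$ check). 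Crucially, Maximum Iterate $^\iter$ is defined via the base case $\rneg\alpha$, which as the footnote to its definition notes excludes infinite looping: on a finite reachable state space any iteration must either reach a state with no outgoing $\alpha$-step or revisit a state, and the semantics of $^\iter$ (longest transition, with no model if the loop never exits) means the simulation either halts after at most $|{\rm reachable\ states}| = O(n^c)$ steps or fails. I would make this precise by induction on the structure of $\alpha$: define a recursive evaluation procedure $\mathrm{Eval}(\alpha, \strA)$ that returns the (unique, or undefined) endpoint of the $\ch$-path, show by structural induction that it is correct with respect to $\sema{\cdot}$, and bound its running time by $T(\alpha, n) = \mathrm{poly}(n)$ using the subexpression structure — sequential composition adds, union branches but evaluates at most both, and iterate multiplies by the polynomial state-space bound.

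The remaining ingredient is that a single atomic step is polynomial-time: evaluating $\sem{M(\underline{\bX};\bY)}$ from a given state $\strA$ means, by \eqref{eq:semantics-of-atomic-modules}, finding a $\strC \in S(M)$ agreeing with $\strA$ on the inputs, and the SM-PP form \eqref{eq:SM-PP-rule} says the output is obtained by evaluating a fixed conjunctive query over $\strA$ (inputs are monadic or EDB) and then — for the $\e$-version — letting $\ch$ select one witness. Conjunctive query evaluation with a fixed query is in PTIME (indeed $\mathrm{AC}^0$) in the size of the structure, and consulting the fixed function $\ch$ on the current finite path is constant per step; the ``copy the rest by inertia'' clause is a linear-time copy. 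Assembling these, the whole simulation runs in $\mathrm{poly}(n)$, establishing membership in PTIME.

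I expect the main obstacle to be the bookkeeping around the iteration operator and the reachable-state bound: one must argue carefully that the ``registers'' whose interpretation can vary are exactly those monadic symbols syntactically manipulated by the atomic modules in $\alpha$ (so their number is bounded by $|\alpha|$, a constant), that all other symbols — including the genuinely higher-arity input relations — are never modified and hence stay equal to their value in $\strA$, and consequently that $^\iter$ cannot run forever without either exiting or producing no model. A secondary subtlety is that the semantics is defined over the infinite tree $\Tr$ and the branch $b(\ch)$; I would need to note that for the purpose of the Main Task only the finite prefix of $b(\ch)$ reachable from $\strA$ in polynomially many steps is consulted, so no issue arises from $\ch$ being an infinite object — in the data-complexity reading $\ch$ is ``fixed'' and we only query it on polynomially many finite arguments. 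Handling these points rigorously is where the real work lies; the complexity counting itself is routine once the deterministic, loop-free, polynomial-state-space picture is in place.
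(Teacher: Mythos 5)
Your proposal is correct and follows essentially the same route as the paper: a structural induction over the term (the paper formalizes your recursive $\mathrm{Eval}$ as Plotkin-style structural operational semantics rules), the same polynomial reachable-state bound via the singleton-monadic ``register'' restriction (the paper's Register Kripke structure lemmas giving $n^k$ states), the same observation that fixed-query SM-PP evaluation is in uniform AC$^0$ with the fixed $\ch$ consulted at no cost, and the same termination argument for Maximum Iterate via the finite state space and the $\rneg\alpha$ base case. Your added remarks on only querying a finite prefix of $b(\ch)$ and on which symbols can vary are sensible elaborations rather than a departure.
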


Note that there could be exponentially many such Choice functions $\ch$ in general. 
Thus, part (i) of Definition \ref{def:logic-captures-PRTIME} of a logic for PTIME is not yet satisfied. 
The eventual goal is to study symmetries among Choice functions, to determine  \emph{syntactic} conditions for when constructing just one witness is enough to determine membership in the computational problem $\cP_{\alpha}$ defined by (\ref{eq:comp-problem}).

To prove Theorem \ref{th:membership-PTIME}, we need the following notion.

\subsection{Register Kripke Structures}\label{sec:register-structures}
By the   semantics of (\ref{eq:one-sorted}) in terms of binary relations, we can interpret \logic over transition systems (Kripke structures) 
with the \textbf{universe $\Un$}, i.e., the set of $\tau$-structures  over the input domain $dom(\strA)$, and transitions given by  \textbf{binary relations on $\Un$}.
In \emph{Register Kripke  structures,} the ``dynamic'' part, i.e., all relations that change from state to state are monadic singleton-set, i.e., are similar to registers in computer programming.

\begin{definition}[Register Kripke structure]\label{def:Register-structure}
	A Kripke structure $\cR_{\strA}$ with states in $\Un$ and transitions in $\Un\times \Un$ is called \emph{Register} if it is a pointed Kripke structure rooted in $\strA \in \Un$,  where the only relations that change from state to state are unary singleton-set relations.
\end{definition}
We call such structures Register structures, by analogy with Register Machines \cite{ShepherdsonSturgis:1961}.  Thus, in a Register Kripke structure, we have a set of ``registers'', each containing exactly one domain element in each state. 
In a Register Kripke structure, each state can be viewed as a structure of the form:
\begin{equation}\label{eq:state-of-register-structure}
	\strB \ : = \    (dom({\strA});  \dots ,  (s(X_1))^\strB, \dots, (s(X_k))^{\strB}),
\end{equation}
where $X_i, \dots X_k$ are variables that represent registers. Note that the interpretations of the ``real'' (i.e., EDB in the database terminology) input variables in (\ref{eq:state-of-register-structure}) are omitted,  since these (potentially non-unary) input relations \emph{never change} as computation progresses. 
Alternatively, such a state is uniquely represented by tuple of domain elements 
$$
\ba \ := \ a_1, \dots, a_k,
$$
such that $a_i  \in (s(X_i))^\strB$.
There are $n^k$ such tuples, where $n = |dom(\strA)|$. This observation is formalized in the following lemma.
\begin{lemma}\label{lemma-Register-strructures-polynomial-number-of-states}
	If $\strA$ is an input structure, then a	Register Kripke structure $\cR_{\strA}$ has at most $n^k$ states, where $n = |dom(\strA)|$ and $k$ is the number of registers.
\end{lemma}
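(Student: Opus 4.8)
The plan is to prove Lemma~\ref{lemma-Register-strructures-polynomial-number-of-states} by a direct counting argument, relying entirely on the structural definition of a Register Kripke structure given in Definition~\ref{def:Register-structure} and the representation of its states in~(\ref{eq:state-of-register-structure}).

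First I would recall that, by Definition~\ref{def:Register-structure}, every state of $\cR_{\strA}$ is a $\tau$-structure over the \emph{same} domain $\Domain = dom(\strA)$, and the only components that vary between states are the interpretations of the $k$ register variables $X_1,\dots,X_k$, each of which is constrained to be a singleton-set unary relation. The non-register (``real''/EDB) part of every state is fixed, equal to the corresponding part of $\strA$, so it contributes no multiplicity. Hence a state is determined up to the fixed part by the tuple of singleton interpretations $\big((s(X_1))^{\strB},\dots,(s(X_k))^{\strB}\big)$, and since each $(s(X_i))^{\strB}$ is a singleton $\{a_i\}$ with $a_i\in\Domain$, the state is determined by the tuple $\ba = a_1,\dots,a_k \in \Domain^k$, exactly as noted after~(\ref{eq:state-of-register-structure}).

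Next I would make precise the map from states to tuples: define $\Phi(\strB) := (a_1,\dots,a_k)$ where $\{a_i\} = (s(X_i))^{\strB}$. By the previous paragraph $\Phi$ is well-defined and injective on the state set of $\cR_{\strA}$ (two states with the same register tuple and the same fixed EDB part are the same $\tau$-structure). Therefore the number of states is at most $|\Domain^k| = |\Domain|^k = n^k$, which is the claimed bound. One small point worth addressing explicitly is that a register could in principle be \emph{empty} rather than a true singleton during a computation; if the intended convention allows that, the count becomes at most $(n+1)^k$, still polynomial, but if the definition insists each register holds exactly one element (as the text ``each containing exactly one domain element in each state'' suggests) then $n^k$ is the exact statement, so I would just note this reading of the definition and conclude.

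I do not expect any genuine obstacle here: the lemma is essentially bookkeeping once the right state representation is fixed, and all the needed ingredients—that states share a common finite domain, that only unary singleton registers vary, and that the EDB part is copied by inertia—are already established in Section~\ref{sec:register-structures}. The only thing requiring mild care is being explicit that distinct states must differ in some register (otherwise they would be literally the same $\tau$-structure), so that injectivity of $\Phi$, and hence the bound $n^k$, is legitimate rather than merely an upper bound on ``reachable register configurations''.
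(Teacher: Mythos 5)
Your proof is correct and matches the paper's own justification, which is precisely the observation preceding the lemma: each state is uniquely determined by the $k$-tuple of register contents (the EDB part being fixed), and there are $n^k$ such tuples. Your added remarks on injectivity and on the possibly-empty-register reading are reasonable refinements but do not change the argument.
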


\begin{lemma}\label{lemma:OM-Datalog-Register}
	The interpretations of $\Lo$ can be Register Kripke structures only.
\end{lemma}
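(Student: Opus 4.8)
The plan is to unpack the definitions of \logic (Definition \ref{def:main-logic}) and Register Kripke structures (Definition \ref{def:Register-structure}) and verify that the semantic context $\inst{}$ associated with any SM-PP atomic transduction, when read as a binary relation on $\Un$, only ever alters unary singleton-set relations while transferring everything else by inertia. Concretely, I would argue that the interpretation of an arbitrary algebraic term $\alpha$ of \logic, evaluated starting from a pointed state $\strA$, generates a pointed Kripke structure whose reachable states all have the form (\ref{eq:state-of-register-structure}); this is exactly what Definition \ref{def:Register-structure} requires.

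First I would handle the atomic case. By Definition \ref{def:main-logic}, the only atomic process terms are $\id$, tests $\phi?$, proposition modules $M_p(\underline{\bX};)$, and Choice-prefixed action modules $\varepsilon M_a(\bZ)$, where each $M_a$ is axiomatized by an SM-PP transduction in the sense of Definitions \ref{def:SM_PP}--\ref{def:SM_PP}+1, i.e., a finite set of rules $R(x)\rul B_1,\dots,B_m$. The key structural fact, read off from (\ref{eq:SM-PP}) and (\ref{eq:semantics-of-atomic-modules}), is that the head relations $R$ are singleton-set and monadic, and the last line of (\ref{eq:semantics-of-atomic-modules}) forces every symbol outside $s(\bY)$ — in particular every ``real'' EDB input relation and every register not being updated — to be copied unchanged from $\strA$ to $\strB$. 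The body relations $B_i$ are, by Definition \ref{def:SM_PP}, either unary (hence registers, singleton-set) or EDB coming from the input structure; in either case they are not modified. So a single atomic step changes only unary singleton-set relations, and $\id$, tests, and proposition modules change nothing at all (they are subsets of the identity on $\Un$). The Choice operator $\varepsilon$ merely selects one among the possible $M_a$-successors along the branch $b(\ch)$ by (\ref{eq:epsilon-semantics}), so it does not enlarge the set of relations that may change.

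Next I would lift this to compound terms by structural induction on $\alpha$ following the grammar (\ref{eq:one-sorted}) / (\ref{eq:LLIF-two-sorted}). For $\alpha\comp\beta$, $\alpha\sqcup\beta$, $\sigma^r_{X=Y}(\alpha)$, $\alpha^\iter$, $\rneg\alpha$, $\lneg\alpha$, $\phi?$, and $\alpha^\converse$: each of these semantic operations (Section \ref{sec:Algebra-semantics}) is built from the relations denoted by the immediate subterms by composition, union, selection, iteration, and the unary negations, none of which introduces a transition $(\strB,\strB')$ that is not already realized by some subterm (composition chains existing transitions; $\sqcup$, $\sigma^r$, the unary negations, $\phi?$ only prune; $\iter$ iterates; $\converse$ reverses along the same linear trace, as noted after (\ref{eq:converse})). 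Hence by the induction hypothesis every transition produced by $\sema{\alpha}$ changes only unary singleton-set relations, and every reachable state from the root $\strA$ retains the interpretations of all EDB symbols. Therefore the transition system on $\Un$ rooted at $\strA$ induced by any $\alpha$ satisfies Definition \ref{def:Register-structure}, which is the claim. I would also remark that Lemma \ref{lemma-Register-strructures-polynomial-number-of-states} then applies, since the number of registers $k$ is bounded by the (fixed) number of relational variables occurring in $\alpha$ and in the axiomatizations of the modules named in $\Sch{}$.

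I expect the main obstacle to be bookkeeping rather than conceptual: making precise what counts as a ``register'' versus an EDB relation, and ensuring that the instantiation function $s$ and the inertia clause in (\ref{eq:semantics-of-atomic-modules}) genuinely prevent any EDB symbol $s(B_i)$ from being placed among the updated symbols $s(\bY)$ — one must check that the SM-PP restriction in Definition \ref{def:SM_PP} (clause (2): a non-unary body atom must be EDB) rules out exactly the offending case where a wide relation would be rewritten. A secondary subtlety is the converse clauses (\ref{eq:converse}): since $\converse$ is defined relative to the fixed Choice function $\ch$ and simply reverses a linear trace, no new states and no new changing relations are introduced, but this should be stated explicitly so the induction goes through cleanly.
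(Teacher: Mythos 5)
Your proposal is correct and follows essentially the same route as the paper: the paper's own proof simply observes that SM-PP atomic transductions (with $\varepsilon$ selecting one output) produce only singleton-set monadic relations while inertia transfers everything else, and leaves the lifting to compound terms implicit, whereas you spell out that structural induction explicitly. The extra care about the converse clauses and the EDB/register distinction is sound but not a different argument.
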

\begin{proof}
	By  Definition \ref{def:main-logic} of logic $\Lo$ (or \logic),  atomic transitions of $\Lo$ are Singleton-set Monadic  Primitive Positive definable. In DetLDL$_f$, due to the $\e$ operator, only one possible  interpretation of the output symbols is selected.  Thus, such modules produce only singleton-set relations on the output. The interpretation of the other symbols in $\tau$ is transferred by inertia from the previous state. Therefore, the resulting Kripke structures are Register.
\end{proof}

\subsection{Proof of Theorem \ref{th:membership-PTIME}}\label{sec:proof-membership-in-PTIME}
\begin{proof}
We need to argue that the operations of (\ref{eq:one-sorted}) preserve  PTIME data complexity of the main task.  
We analyze the complexity recursively, on the structure of an algebraic expression, and represent the algorithm as a set of rules of Plotkin's Structural Operational Semantics \cite{Plot81}. The transitions describe ``single steps'' of the computation, as in the computational semantics \cite{Henn90}.
 A primitive step  is either due to an atomic transition, or is (a subset of) the Identity  relation $\id$ on $\tau$-structures, i.e., is a ``test''.
  The execution of the algorithm consists of constructing a bottom-up  derivation tree, starting with an input structure and an algebraic term $\alpha$.	
	The goal is to apply the rules of the structural
	operational semantics (below) starting from $(\alpha, \strA)$ and arriving to $true$. 
	In that case, we say that 	the  \emph{derivation is successful}.
	
\vspace{2ex}
	
		\noindent	\textbf{Identity (Diagonal)}	$\id$:
	$$
	\frac{true}{(\id, \strB) \longrightarrow (\id, \strB)}  .
	$$

		\noindent	\textbf{Atomic modules}  $M(\underline{\bX};\bY)$, \textbf{with} $\varepsilon$: 
		$$
		\frac{true}{(\varepsilon M(\underline{\bX};\bY), \strB_1) \longrightarrow (\id, \strB_2)}  \mbox{ if }    \strB_2|_{(\tau \setminus s(\bY))}=
		\strB_1|_{(\tau \setminus s(\bY))},
		$$
	$\strB_2 \in \inst(M(\underline{\bX};\bY))$ and	$\strB_2$ is the output of $\e M(\underline{\bX};\bY)$ according to the  Choice function $\ch$ that instantiates  $\varepsilon$.

\vspace{1ex}
		
		\noindent	\textbf{Sequential Composition} $\alpha \comp \beta$:
			$$
		\frac{(\alpha, \strB_1) \longrightarrow (\alpha',\strB_2) }{(\alpha \comp \beta, \strB_1) \longrightarrow
			(\alpha' \comp \beta, \strB_2) },
		$$
		$$
		\frac{(\beta, \strB_1) \longrightarrow (\beta',\strB_2)  }{(\id \comp \beta, \strB_1) \longrightarrow
			(\id \comp \beta', \strB_2) }.	
		$$

\vspace{1ex}

		\noindent	\textbf{Preferential Union}  
			$\alpha \sqcup \beta$:
	$$
	\frac{(\alpha, \strB_1) \longrightarrow (\alpha',\strB_2)   }{(\alpha \sqcup  \beta, \strB_1) \longrightarrow
		(\alpha' , \strB_2) }.
	$$
	That is, $\alpha \sqcup  \beta$  evolves according to the instructions of  $\alpha$, if  $\alpha$ can successfully evolve to $\alpha'$. 
	$$
	\frac{(\beta, \strB_1) \longrightarrow (\beta',\strB_2) \mbox { and  } (\rneg \alpha, \strB_1) \longrightarrow (\rneg \alpha,\strB_1)    }{(\alpha \sqcup  \beta, \strB_1) \longrightarrow
		( \beta', \strB_2) }  .
	$$
The rule says that $\alpha \sqcup  \beta$  evolves according to the instructions of  $\beta$, if $\beta$ can successfully evolve, while $\alpha$ cannot.		
		
	\vspace{1ex}	
		
			\noindent	\textbf{Unary Negation (Anti-Domain)} $\rneg \alpha$:
	There are no one-step derivation rules for $\rneg \alpha$. Instead, we try, step-by-step, to derive $\alpha$, and if not derivable, make the step.	
					$$
			\frac{true    }{(\rneg \alpha, \strB_1) \longrightarrow ( \id,\strB_1)  } \mbox{ if derivation of $\alpha$ in $\strB_1$ fails} .
			$$		
	\noindent	\textbf{Left Unary Negation (Anti-Image)} $\lneg \alpha$:	
			$$
		\frac{true    }{(\lneg \alpha, \strB_2) \longrightarrow ( \id,\strB_2)  } \mbox{ if derivation of $\alpha^\converse$ in $\strB_2$ fails} .
		$$

\vspace{1ex}

	\noindent	\textbf{Right Selection}	 	
		$$
 \frac{true  }{(\sigma^r_{ X\eq  Y} (\alpha), \strB_1) \longrightarrow
 	(\id , \strB_2) } \mbox{ if }  (s(X))^{\strB_2} =  (s(Y))^{\strB_2}
 $$
 and the derivation for $\alpha$ succeeds.

\vspace{1ex}

	\noindent	\textbf{Converse}
		We assume that, at a preprocessing stage,  Converse  is pushed inside the formula as much as possible. 
		Thus, we need to deal with the atomic case only.
		$$
	\frac{true}{((\varepsilon M(\underline{\bX};\bY))^\converse, \strB_2)\! \longrightarrow \! (\id, \strB_1)} \mbox{ if }    \strB_2|_{(\tau \! \setminus\! s(\bY))}\!\!=\!
	\strB_1|_{(\tau \! \setminus\! s(\bY))},
	$$
	$\strB_2 \in \inst(M(\underline{\bX};\bY))$ and	$\strB_2$ is the output of $\e M(\underline{\bX};\bY)$ according to the fixed Choice function that instantiates  $\varepsilon$.

\vspace{1ex}

	\noindent	\textbf{Maximum Iterate} $\alpha^\iter$:
		$$
	\frac{(\alpha, \strB_1) \longrightarrow (\alpha',\strB_2)}{(\alpha^\iter, \strB_1) \longrightarrow
		(\alpha' \comp \alpha^\iter, \strB_2) }.
	$$
	Thus, Maximum Iterate evolves according to $\alpha$, if $\alpha$ can evolve successfully. It fails if $\alpha$ cannot evolve.

\vspace{3mm}

The derivation process is deterministic: whenever there are two rules for a connective, only one of them is applicable. The process always terminates: the number of structures is finite, and the  program (algebraic term) on top of each rule is always smaller, in terms of the number of remaining actions to execute, than the one on the bottom.
The correctness of the algorithm follows by  induction on the structure of the algebraic expression, using the semantics in terms of binary relations.

 We are now ready to present our argument.
A ``yes'' answer to the main computational  task  (\ref{eq:main-task}) is witnessed by 
a word in the alphabet $\Un$  that starts from the input structure $\strA$, or, equivalently, \emph{by a successful derivation} using the rules above.  
We argue, by induction on the structure of the algebraic term $\alpha$ in (\ref{eq:one-sorted}), that the complexity of the task (\ref{eq:main-task})  is in PTIME. 
Query evaluation  for SM-PP atomic modules  is computable in uniform AC$^0$, and thus, in polynomial time in the size of the input structure $\strA$. Only one output is selected (externally) by the fixed interpretation of  $\varepsilon$ at no cost.
We proceed by induction. 
As is seen from the rules, the non-recursive operations, Sequential Composition, Preferential Union, Unary Negations and the Right Selection, preserve PTIME evaluations with respect to data complexity.
For the case of the limited recursion construct, the derivation for  $\alpha ^\iter$ always terminates  because (a) the transition system generated by executing the program is  finite (for a finite input domain); and (b)
the process stops when there is no outgoing $\alpha$-transition, which is the base case of $\alpha ^\iter$.  Thus, Maximum Iterate explicitly disregards cycles -- if there is an infinite loop, there is no model. Moreover, 
since, by Lemma \ref{lemma:OM-Datalog-Register} 
		the Kripke structures we obtain 
are Register, by Lemma \ref{lemma-Register-strructures-polynomial-number-of-states}, they have at most $n^k$ states.  
 It implies that a trace produced by Maximum Iterate can be of length at most $n^k$, the total number of states in a Register transition system.
 For a fixed formula, there could be only a finite number of Maximum Iterate constructs. Since each step takes at most a polynomial time, the total data complexity of computing the main task is polynomial. \end{proof}

\vspace{2ex}

\section{Path Semantics}
\label{sec:Path-Semantics}

 In this section, we develop a different perspective on the algebra. 
 We define an alternative (and equivalent) \emph{path} semantics of (\ref{eq:LLIF-two-sorted}) initially inspired by \cite{DeGiacomo-Vardi:IJCAI:013}.  This semantics provides a link to automata theory and the notion of formal language equivalence, $\cL(\alpha) = \cL(\beta)$. Moreover, it allows us to view our algebra as an algebra of  partial functions on finite words in $\Un^+$.

\subsection{Path Semantics}\label{sec:path-semantics}
Recall that a \emph{path} is  a finite word $\ppath \subseteq \Un^+$, as defined in Section \ref{sec:Preliminaries}. We use $\ppath(i,j)$ to denote the \emph{subword} obtained from $\ppath$ starting from position $i$ and terminating in position $j$, where $0\leq i \leq j \leq \last(\ppath)$.

Path semantics of the two-sorted syntax (\ref{eq:LLIF-two-sorted})  is defined by a simultaneous induction on formulae $\phi$ and process expressions $\alpha$.

\noindent \textbf{ (a) State Formulae (line 2 of (\ref{eq:LLIF-two-sorted}))} 
We define when  state $\ppath(i)$ of path $\ppath$ \emph{satisfies} formula $\phi$  under  $\inst{}$, denoted $ \ppath, i \models_\inst{} \phi$,  where  $0 \leq i \leq |\ppath|$, as follows:
\vspace{1ex}
\begin{itemize}
	\item $\ppath,   i \models_\inst{} M_p(\underline{\bX};\ )$, for $M_p \in \Sch{}$, iff $\ppath(i) \models_\inst{} M_p(\underline{\bX};\ )$, for  proposition module $M_p$.

	\item $ \ppath,  i \models_\inst{} \neg \phi$ iff $   \ppath, i \not \models_\inst{} \ \phi$,

	\item $ \ppath,   i \models_\inst{}  | \alpha \rangle \phi$ iff  there is  $j$,  $ i\leq j \leq \last(\ppath)$, with $\ppath(i,j) \in {\cL} (\alpha)$ such that $\ppath(j) \models \phi$,

	\item $\ppath,    i \models_\inst{}  | \alpha ] \phi$ iff  for all   $j$,  $ i\leq j \leq \last(\ppath)$, such that $\ppath(i,j) \in {\cL} (\alpha)$, we have $\ppath(j)\models \phi$.

	The cases for $ \langle \alpha | $ and $ [ \alpha | $ are similar. 	
\end{itemize}

\noindent \textbf{\bf (b) Process (Path) Formulae (line 1 of (\ref{eq:LLIF-two-sorted}))} 
These formulae are interpreted  as in the binary semantics, but always with respect to a path.  
For each $\alpha$,  relation $\ppath(i,j) \in {\cL} (\alpha)$ is defined as follows.

\begin{itemize}
	
	\item Atomic modules:
	$\ppath (i,i+1) \in \cL(\varepsilon M_a(\underline{\bX};\bY)) $ if $  (\ppath(i),\ppath(i+1))  \in  \sem{ M_a(\underline{\bX};\bY) }$ and  $M_a\in \Sch{} $, for action module $M_a$.

	\item Diagonal: 
	$\ppath (i,j) \in \cL(\id) $ if $j=i$.
	
	\item Test: $\ppath (i,j) \in \cL(\phi?) $ if $j=i$ and $  \ppath, i \models_\inst{} \phi$.
	
	\item Right Unary Negation: $\ppath (i,j) \in \cL(\rneg \alpha) $ if $j=i$ and there is no $k$ such that 
	$\ppath (i,k) \in \cL(\alpha)$.
	
	\item Left Unary Negation: $\ppath (i,j) \in \cL(\lneg \alpha) $ if $j=i$ and there is no $l$ such that 
	$\ppath (l,j) \in \cL(\alpha)$.

	\item Sequential Composition: $\ppath (i,j) \in \cL(\alpha \comp \beta )  $ if there exists $l$, with $i\leq l \leq j$, s.t. $\ppath (i,l) \in \cL(\alpha  ) $ and  $\ppath (l,j) \in \cL(\beta )$.
	
	\item Preferential Union:  $\ppath (i,j) \in \cL(\alpha \sqcup \beta )  $ if   (recall that $ \Dom (\alpha) := \rneg \rneg \alpha $):
	$$
\!\!\!\!	\left\{   \! \! \! \left.
	\begin{array}{l}
	\ppath (i,j) \! \in \! \cL(\alpha)  \   \mbox{ and } \ \ppath(i) \models_\inst{} \Dom(\alpha),  \mbox{ or }\\
	\ppath (i,j) \! \in \! \cL(\beta  ),    \ \ \ppath(i) \not \models_\inst{} \Dom(\alpha)     \mbox{ and  }  \ppath(i) \models_\inst{} \Dom(\beta)     
	\end{array} \right.\! \! \! \right\}.
	$$

	\item 	Maximum Iterate: $\ppath (i,j) \in \cL(\alpha^\iter )  $ if $i=j$ and $ \ppath (i,j) \in \cL( \rneg \alpha ) $ or there exists $l$, with $i\leq l \leq j$, such that $\ppath (i,l) \in \cL(\alpha  ) $ and $\ppath (l,j) \in \cL(\alpha^\iter )$.

	\item Right Selection:  $\ppath (i,j) \in \cL(\sigma^r_{ X \eq  Y} (\alpha)) $ if $\ppath (i,j) \in \cL(\alpha)  $ and 	$	   (s(X))^{\ppath(j)} =  (s(Y))^{\ppath(j)}	$,
	
	\item Converse:
	$\ppath (i,j) \in \cL(\alpha^\converse)  $ if
	$(\ppath (i,j))^{-1} \in \cL(\alpha)  $. 
	
\end{itemize}

\subsection{Correspondence between Binary and Path Semantics}\label{sec:binary-path-correspondence}

Recall the notion of an $\ch$-path introduced at the end of  the Technical Preliminaries (Section \ref{sec:Preliminaries}). The following theorem shows  a connection between path and binary semantics for \emph{process} formulae. 
\begin{theorem} \label{th:path-binary-semantic}
	$\ppath(i,j) \in {\cL} (\alpha) \ \  \Leftrightarrow\ \  (\ppath(i),\ppath(j))  \in \sema {\alpha}$, where  $\ppath$ is an $\ch$-path, i.e., $\ppath\! \in\! b(\ch)$ for a concrete Choice function $\ch$.
\end{theorem}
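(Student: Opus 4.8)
The plan is to prove the equivalence by simultaneous structural induction on the process expression $\alpha$, mirroring exactly the inductive definition of the path semantics in Section~\ref{sec:path-semantics} and the binary semantics in Section~\ref{sec:Algebra-semantics}. Since state formulae $\phi$ appear inside process expressions (via the test $\phi?$ and, implicitly, through $\Dom(\alpha)=\rneg\rneg\alpha$), the induction must be carried out jointly with the corresponding statement for state formulae, namely $\ppath,i\models_\inst{}\phi \Leftrightarrow \ppath(i)\models_\inst{}\phi$ for $\ch$-paths $\ppath$; I would either fold this into the theorem's induction hypothesis or cite it as an auxiliary lemma proved by the same mutual recursion. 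Throughout, the standing assumption that $\ppath$ is an $\ch$-path, i.e.\ $\ppath\in b(\ch)$, is what makes the two semantics align: consecutive elements $\ppath(i),\ppath(i+1)$ on an $\ch$-path are exactly $w(\last),w'(\last)$ with $w'=\ch(w)$ and $w,w'\in b(\ch)$, which is precisely the condition appearing in the semantics~(\ref{eq:epsilon-semantics}) of the $\e$ operator.

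The base cases are the atomic module case, the diagonal $\id$, and the test $\phi?$. For $\id$ both sides say $i=j$ and $\ppath(i)=\ppath(j)$. For the atomic case $\e M_a(\underline{\bX};\bY)$, I would unfold: $\ppath(i,i+1)\in\cL(\e M_a(\underline{\bX};\bY))$ holds iff $(\ppath(i),\ppath(i+1))\in\sem{M_a(\underline{\bX};\bY)}$ by definition of the path semantics; on the binary side, $(\ppath(i),\ppath(i+1))\in\sema{\e M_a(\underline{\bX};\bY)}$ requires additionally $\ppath(i+1)=\ch(\ppath(i))$ with both in $b(\ch)$ — and since $\ppath$ is an $\ch$-path these hold automatically for every consecutive pair. (A small point worth checking: the path-semantics clause as written drops the explicit $\ch$-successor requirement because it is absorbed into the ``$\ppath$ is an $\ch$-path'' hypothesis of the theorem; I would state this reconciliation explicitly.) For $\phi?$ I invoke the state-formula part of the induction hypothesis. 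The inductive steps for Sequential Composition, Preferential Union, the two Unary Negations, Right Selection, and Maximum Iterate are then routine unfoldings: each path-semantics clause is syntactically the ``indexed'' version of the corresponding binary clause, and applying the induction hypothesis to the subexpressions ($\alpha$, $\beta$, or $\alpha_n$) converts one into the other. For Maximum Iterate one additionally needs that any subword $\ppath(l,j)$ of an $\ch$-path is again realized along $b(\ch)$ so the hypothesis applies to $\alpha^\iter$ on the suffix; this is immediate since prefixes/suffixes of $\ch$-paths sit inside $b(\ch)$.

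The step I expect to require the most care is Converse. On the path side, $\ppath(i,j)\in\cL(\alpha^\converse)$ is defined via the reversed subword $(\ppath(i,j))^{-1}\in\cL(\alpha)$; on the binary side, $\sema{\alpha^\converse}$ is defined by the inductive clauses~(\ref{eq:converse}) pushing converse inward to the atomic level, where it simply swaps the pair. These two definitions are not literally the same recursion, so I would prove separately — by a short auxiliary induction on $\alpha$ — that reversing the witnessing subword corresponds exactly to applying the clauses~(\ref{eq:converse}); the key observation, already noted in the text, is that because the semantics is parameterized by a single Choice function $\ch$ and we work along one linear $\ch$-path, ``converse'' is just reversal along that fixed trace, so the two notions coincide. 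The unary negations $\rneg,\lneg$ are fixed by converse (clauses~(\ref{eq:converse})), matching the fact that their path clauses are direction-symmetric up to swapping ``no $k$ with $\ppath(i,k)$'' for ``no $l$ with $\ppath(l,j)$''. Once Converse is handled, the remaining cases and the wrap-up are mechanical, and the theorem follows.
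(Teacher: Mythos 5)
Your proposal is correct and follows essentially the same route as the paper's proof: a structural induction on process expressions in which each path-semantics clause is matched against the corresponding binary clause, with the $\ch$-path hypothesis doing the work in the atomic $\e M_a$ case. In fact you are more careful than the paper on two points it glosses over --- the mutual induction with state formulae needed for $\phi?$ and $\Dom$, and the reconciliation of the subword-reversal definition of Converse with the push-inward clauses~(\ref{eq:converse}) --- while the paper's only extra ingredient is that it incrementally builds the witnessing branch $b\subset b(\ch)$ alongside the induction rather than fixing $\ch$ up front.
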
	
Since $b(\ch)$ determines an equivalence class of Choice functions,  $\ppath\! \in\! b(\ch)$ and $\strA=\ppath(0)$ for an input structure $\strA$, strings of structures (\ref{eq:string}) can be viewed as witnesses for the computational problem (\ref{eq:comp-problem}) introduced in Definition \ref{def:comp-problem}, as already mentioned in Section \ref{sec:computational-problem}.
\begin{proof} We show the equivalence by induction on the structure of the process formulae in (\ref{eq:LLIF-two-sorted}).  We demonstrate the existence of $\ch$ by constructing  a subset $b$ of $b(\ch)$. Any extension of this subset to the full $b(\ch)$ corresponds to an equivalence class $[\ch]$ of Choice functions, see Figure \ref{fig:tree}. Any function in that class gives us the  $\ch$ needed in the theorem.

\vspace{1ex}

\noindent \underline{Base cases:}  
	\begin{itemize}
		\item Atomic modules:		$\ppath (i,i+1) \in \cL(\varepsilon M_a(\underline{\bX};\bY)) $ \\
	 $\!\!\Leftrightarrow$ (by the path semantics)  $\!(\ppath(i),\ppath(i+1))  \! \in \! \sem{ M_a(\underline{\bX};\bY) }$

	$\Leftrightarrow\ \mbox{ (by the binary semantics (\ref{eq:epsilon-semantics})) }$
	$(\ppath(i),\ppath(i+1)) \in \sema {\varepsilon M_a(\underline{\bX};\bY) }$, where  $\ppath_i,\ppath_{(i+1)}\in b\subset  b(\ch)$ (recall that $\ppath_i$ is the prefix of $\ppath$ ending in position $i$). That is, $\ch$ must ``agree'' with  $M_a$, $f(\ppath_i)= \ppath(i+1)$.  
	
	\item Diagonal: 
	$\ppath (i,j) \in \cL(\id) $ $\Leftrightarrow\ \mbox{ (by the path semantics) } $ $j=i$
	
		$\Leftrightarrow\ \mbox{ (by the binary semantics) }$
	$(\ppath(i),\ppath(i)) \in \sema {\id }$. That is, only $\ppath_i$ is added to $b\subset  b(\ch)$.

	\item Test: $\ppath (i,j) \in \cL(\phi?) $ $\Leftrightarrow\ \mbox{ (by the path semantics) } $ $j=i$ and $  \ppath, i \models_\inst{} \phi$

		$\Leftrightarrow\ \mbox{ (by the binary semantics) }$
	$(\ppath(i),\ppath(i)) \in \sema {\phi? }$.
	Again, only $\ppath_i$ is added to $b\subset b(\ch)$.

\end{itemize}

\noindent \underline{Inductive cases:}  Assume that, for the paths  $\ppath (i,l)$ and $\ppath (k,j)$, the equivalence holds, and $\ppath_i$, $\ppath_l$, $\ppath_k$, and $\ppath_j$ (by construction, all prefixes of the same path $\ppath$)  are in  $b\subset b(\ch)$, for some $\ch$. 
We show that the theorem holds for all applications of the algebraic operations.

\begin{itemize}	
	
	\item Sequential Composition: $\ppath (i,j) \in \cL(\alpha \comp \beta )  $ $\Leftrightarrow\ \mbox{ (by the path semantics) } $  there exists $l$, with $i\leq l \leq j$, such that $\ppath (i,l) \in \cL(\alpha  ) $ and $\ppath (l,j) \in \cL(\beta )$. By inductive hypothesis, 
	$(\ppath(i),\ppath(l)) \in \sema {\alpha} $, $(\ppath(k),\ppath(j)) \in \sema {\beta}$, where $l=k$, and $\ppath_i$, $\ppath_l$ and $\ppath_j$   are in  $b\subset b(\ch)$, for some $\ch$. By the binary semantics of Composition, $(\ppath(i),\ppath(j)) \in \sema {\alpha \comp \beta} $. Since  $\ppath_i$ and $\ppath_j$   are already in  $b\subset b(\ch)$, the theorem holds for this case. 	
\end{itemize}
For the other inductive cases, the same type of reasoning applies. We conclude that the theorem holds.
\end{proof}

The theorem implies that the \emph{binary} satisfaction relation (\ref{eq:binary-sat}) of the semantics in terms of binary relations and the path semantics of \emph{process} formulae of (\ref{eq:LLIF-two-sorted}) are equivalent.

\noindent We now show the  correspondence for the  \emph{unary} satisfaction relation (\ref{eq:unary-sat}) and the path semantics of \emph{state} formulae of (\ref{eq:LLIF-two-sorted}).

\begin{theorem}
	For any state formula $\phi$ of (\ref{eq:LLIF-two-sorted}), we have:
$
\ppath, 0 \models_\inst{} \phi    \ \  \Leftrightarrow\ \ \strA \models_\inst{} \phi[\ch/\e],
$ 	
 where $\ppath$ is an  $\ch$-path and $\ppath(0) =\strA$.	 
\end{theorem}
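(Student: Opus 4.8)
The plan is to prove the equivalence by induction on the structure of the state formula $\phi$, leaning on the already-established binary/path correspondence for \emph{process} formulae (Theorem~\ref{th:path-binary-semantic}) and on the definition~(\ref{eq:unary-sat}) of the unary satisfaction relation. Since the path-semantics clauses for modalities refer to positions $j>0$ of $\ppath$, I would first strengthen the statement to all positions: for every $i$ with $0\le i\le\last(\ppath)$, \ $\ppath, i \models_\inst{} \phi \ \Leftrightarrow\ \ppath(i) \models_\inst{} \phi[\ch/\e]$; the theorem is then the instance $i=0$, using $\ppath(0)=\strA$. I would also take $\ppath$ long enough — a prefix of the branch $b(\ch)$ that contains every (necessarily finite) sub-run triggered by the modalities of $\phi$ under $\ch$ — so that a run dictated by $\ch$ from a position of $\ppath$ stays within $\ppath$; this is harmless since, by the termination arguments behind Theorem~\ref{th:membership-PTIME}, all such runs are finite.

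For the base and Boolean cases the argument is routine. For $\mT=\id$ both sides hold at every position. For a proposition module $M_p(\underline{\bX};\,)$, the path clause gives $\ppath,i\models M_p(\underline{\bX};\,)$ iff $\ppath(i)\models M_p(\underline{\bX};\,)$, and since, by the remark following~(\ref{eq:LLIF-two-sorted}), the semantics of $M_p$ is choice-independent and a subset of $\id$, this is exactly $(\ppath(i),\ppath(i))\in\sema{M_p(\underline{\bX};\,)}$, i.e.\ $\ppath(i)\models_\inst{}M_p(\underline{\bX};\,)[\ch/\e]$. For $\neg\phi=\rneg\phi$, the path clause gives $\ppath,i\models\neg\phi$ iff $\ppath,i\not\models\phi$, the induction hypothesis rewrites this as $\ppath(i)\not\models\phi[\ch/\e]$, and, because $\sema{\phi}\subseteq\id$, the binary semantics of $\rneg$ turns the latter into $(\ppath(i),\ppath(i))\in\sema{\rneg\phi}$.

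The heart of the proof is the modal case $|\alpha\rangle\phi$ (with $\langle\alpha|\phi$ symmetric, and the box modalities following by duality). Unfolding, $|\alpha\rangle\phi=\Dom(\alpha\comp\phi)$, so $(\ppath(i),\ppath(i))\in\sema{|\alpha\rangle\phi}$ holds iff there is a structure $\strC$ with $(\ppath(i),\strC)\in\sema{\alpha}$ and $(\strC,\strC)\in\sema{\phi}$; the path clause asks for a position $j$, $i\le j\le\last(\ppath)$, with $\ppath(i,j)\in\cL(\alpha)$ and $\ppath,j\models\phi$. The $(\Rightarrow)$ direction is immediate: given such a $j$, Theorem~\ref{th:path-binary-semantic} yields $(\ppath(i),\ppath(j))\in\sema{\alpha}$, and the induction hypothesis at $j$ gives $(\ppath(j),\ppath(j))\in\sema{\phi}$, so $\strC:=\ppath(j)$ works. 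For $(\Leftarrow)$, given $\strC$ with $(\ppath(i),\strC)\in\sema{\alpha}$ and $\strC\models\phi[\ch/\e]$, I would argue that because the semantics is evaluated against the \emph{fixed} $\ch$ and every atomic step on $b(\ch)$ is functional, the witnessing $\alpha$-run from $\ppath(i)$ proceeds along the branch $b(\ch)$; as $\ppath$ was chosen long enough, $\strC$ occurs as $\ppath(j)$ for some $j$ with $i\le j\le\last(\ppath)$, Theorem~\ref{th:path-binary-semantic} then gives $\ppath(i,j)\in\cL(\alpha)$, and the induction hypothesis at $j$ turns $\strC\models\phi[\ch/\e]$ into $\ppath,j\models\phi$.

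Putting the pieces together, the theorem is the case $i=0$: $\ppath,0\models_\inst{}\phi$ iff $(\strA,\strA)\in\sema{\phi}$ iff, by~(\ref{eq:unary-sat}), $\strA\models_\inst{}\phi[\ch/\e]$. The step I expect to be delicate is exactly the $(\Leftarrow)$ direction of the modal case: one must show that a structure reached from $\ppath(i)$ by $\alpha$ under $\ch$ genuinely appears on $\ppath$ at a position $\ge i$. This is where the determinism of the $\ch$-parameterized semantics (the run cannot branch off $b(\ch)$) and the fact that $\ppath$ is a long enough prefix of $b(\ch)$ (the run does not overshoot $\ppath$) are both used, and it is the price of the path semantics quantifying over positions of $\ppath$ rather than over arbitrary successor structures.
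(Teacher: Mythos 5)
Your proposal is correct and follows essentially the same route as the paper: structural induction on the state formula, unfolding $|\alpha\rangle\phi$ as $\Dom(\alpha\comp\phi)$ and delegating the process sub-terms to Theorem~\ref{th:path-binary-semantic}. Your two refinements --- strengthening the induction hypothesis to arbitrary positions $i$ (which the paper uses implicitly when it applies the hypothesis at the witness position $j$) and spelling out why, in the $(\Leftarrow)$ direction of the modal case, the witnessing structure $\strC$ must occur on $\ppath$ at some $j\ge i$ --- make explicit two steps the paper's proof glosses over, but do not change the argument.
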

\begin{proof}
	The property follows by induction on the structure of the algebraic expressions of (\ref{eq:LLIF-two-sorted}). 
	
	\vspace{1ex}
	
\noindent \underline{Base case, 	atomic modules-propositions:}  

	$\ppath,   0 \models_\inst{} M_p(\underline{\bX};\bX)   \ \ \\ \Leftrightarrow\ \mbox{ (by the path semantics) } \  \ppath(0) \models_\inst{} M_p(\underline{\bX};\bX)  $
\\
	$  \Leftrightarrow $ (by Theorem \ref{th:path-binary-semantic}, assuming  $\ppath$ is an $\ch$-path) $ (\strA, \strA ) \in  \sema{ M_p(\underline{\bX};\bX)  } $	\\
$\ \  \Leftrightarrow\ \ $ (by (\ref{eq:binary-sat}) and (\ref{eq:unary-sat}) )
	$   \strA \models_\inst{} M_p(\underline{\bX};\bX) [\ch/\e]  $. 
	
	\noindent Choice function does not play any role in the case of atomic modules-propositions.

\noindent \underline{Case $\neg \phi$:} 

$ \ppath,  0 \models_\inst{} \neg \phi\ \  \Leftrightarrow\ \mbox{ (by the path semantics) }  \  \ \ppath, 0 \not \models_\inst{} \ \phi \ \   \ $ 
\\
$\ \  \Leftrightarrow\ \ $ (by Theorem \ref{th:path-binary-semantic}, assuming  $\ppath$ is a $\ch$-path)
\ \ $ (\strA, \strA ) \not\in  \sema {\phi  }$

\noindent $\Leftrightarrow\ \ $ (by (\ref{eq:binary-sat}), (\ref{eq:unary-sat}) and the binary semantics of Unary Negation 
by recalling $ \neg  \phi : = \rneg \phi$)   
\ \ $ \strA \models_\inst{}  \neg \phi [\ch/\e ]$.

\noindent \underline{Case $| \alpha \rangle \phi$:} 

 $ \ppath,   0 \models_\inst{}  | \alpha \rangle \phi$ \\
 $   \Leftrightarrow\ \ $ (by definition of the path semantics of $| \alpha \rangle \phi$)
 \hspace{-10mm}

\noindent $\exists\, j  \big( 0\!\! \leq\! j\! \leq \last(p) \mbox{ such that } p(0,j)\!\! \in \!\! {\cal L}(\alpha) \mbox{ and } \ppath(j) \models_\inst{} \phi \big)$ 

\noindent $\Leftrightarrow$ (by Theorem \ref{th:path-binary-semantic}, assuming  $\ppath$ is an $\ch$-path)\\
  $\exists j\ \big((\strA, \strB ) \in  \sema{ \alpha }   \mbox { and } (\strB, \strB ) \in  \sema{ \phi }, \mbox{ where  $\ppath(j)= \strB$}  \big)$

\noindent $\Leftrightarrow$ (by binary semantics of Composition)\\
 $\exists \strB \ \big((\strA, \strB ) \in   \sema{ \alpha \comp \phi }   \big)$ \\
   $\ \  \Leftrightarrow\ \ $ (by binary semantics of Domain operation)\\
    $   (\strA, \strA ) \in  \sema{ \Dom(\alpha \comp \phi)  } $ \\
$\ \  \Leftrightarrow\ \ $ (by binary semantics of $| \alpha \rangle \phi$) \\   \ \
 $ \strA \models_\inst{}  | \alpha \rangle \phi [\ch/\e]$.
	
		The cases for $| \alpha ]$,  $ \langle \alpha | $ and $ [ \alpha | $ are similar.	
\end{proof}

\vspace{2ex}

\section{Examples}\label{sec:Examples}

In this section, we first present some examples of reachability and counting properties, expressed without the use of a counting construct. Then we show an example where two fundamentally different kinds of propagations are combined, with the help of the Choice construct. Towards the end of this section, we demonstrate how the operations of Regular Expressions can be mimicked in the logic. 

\subsection{Combining Counting and Reachability}

It is known that fundamental reachability and counting properties are decidable in PTIME.

\vspace{2ex}
\begin{center}{\small
	\begin{tabular}{l}
		{\textbf{Reachability}}\\
		\toprule
		s-t-Connectivity\\
		Transitive Closure \\
		Same Generation \\
		\dots
		\\
		
	\end{tabular} \hspace{5ex} 
	\begin{tabular}{l}
		{\textbf{Counting}}\\
		\toprule
		EVEN\\
		Equal Size\\
		Size 4 (or 7, or 16)\\
		\dots\\
\end{tabular}}
\end{center}

\noindent Counting properties include, for example,  query EVEN (which asks whether the cardinality of a given structure is even), Equal Size (which compares the cardinalities of two input sets). Reachability properties include   s-t-Connectivity, Transitive Closure  and other similar properties. 
Yet, reachability and cardinality properties seem to reside in two different worlds.
On one hand,  on unordered structures, the usual logics and query languages with an iteration construct do not have the ability to count. 
For instance, the query EVEN  is computable in LOGSPACE but cannot be expressed in the transitive closure logics FO(TC) and FO(DTC), or in fixed-point logics FO(LFP) and FO(PFP). Even $\cL^{\omega}_{\infty,\omega}$,  an infinitary logic that is able to express a very  powerful form of iteration, is unable to count.\footnote{Since the query EVEN is in PTIME, by Fagin's  theorem \cite{Fagin:1974}, it can be expressed in second-order existential logic. However, it cannot be expressed in the monadic second-order logic, for which the separation of its existential and universal fragments is known, unless an order on domain elements is given. }
On the other hand, adding more and more  expressive counting constructs to first-order logic, on unordered structures, does not produce the ability to express transitive closure.
For instance,  despite its enormous counting power,  $\cL^*_{\infty,\omega}({\bf Cnt})$ (an extremely powerful infinitary counting logic, cf. Chapter 8 of \cite{Libkin-book04})  remains local and cannot express properties that require iterative computations (in logic, ``local'' means that the truth of a formula can be determined by looking at neighbourhoods of specific radii, cf. \cite{Libkin-book04}).

So, on unordered structures, the two concepts, cardinality and reachability, seem to be orthogonal. However, combining iteration and counting \textbf{directly} is not enough: it has been known for almost thirty years that FO(FP)+C does not capture all of PTIME, as shown by Cai, F\"{u}rer and Immerman \cite{CFI92}. Their example, now referred to as the CFI property, is not expressible in FO(FP)+C.

Next, we show how the orthogonality is eliminated with the help of our Choice construct. We present two kinds of examples, cardinality and reachability, and then an example where two types of propagations are combined.
The reader will notice that, in modelling these polynomial time properties, the exact sequence of choices does not matter.

To proceed to our examples, we need some definable modalities to refer back in the computation.
Recall that we defined
$
any_M \ : = \ M_1 \sqcup  \dots  \sqcup   M_l,
$
where $M_1, \dots, M_l$ are all the atomic module names in  $\Sch$.
This notation will now be used to introduce abbreviations for temporal properties.
In our setting, Exists ($\E$)  and  Globally ($\G$) are modal operators analogous to $F$ (Exists) $G$ (Globally)  of Linear Temporal Logic (LTL), respectively.
We also introduce Back Exists ($\BE$)  and Back Globally ($\BG$), which are similar to $P$  and $H$, respectively,  of  Arthur Prior's Tense Logic \cite{Prior:1957,Prior:1967} and a version of LTL with backwards-facing temporal modalities.
We define:
$
\E \phi  :=  | {\bf repeat \ } 
any_M 
{\ \bf until } \ \phi \rangle \, \mT .
$
Similarly, for reasoning ``backwards in time'', we have:
$
\BE \phi  :=  \langle ({\bf repeat \ } 
any_M  
{\ \bf until } \ \phi)^\converse | \, \mT .
$
The program executes the actions of $\alpha$ ``backwards'' by ``undoing'' them  until $\phi$ is found to be true.
The dual modalities are:
$
\begin{array}{l}
\G \phi \ := \ \rneg \E \rneg \phi,\ \ \ \ \ 
\BG \phi \ := \ \lneg \BE \lneg \phi.
\end{array}
$
Thus, in particular, $
	\BG \phi  :=  [ ({\bf repeat \ } 
	any_M  
	{\ \bf until } \ \phi)^\converse | \, \mT .
	$

\vspace{2ex}

To avoid a notational overload, we use $\strA$ to refer to both a $\tau$-structure, which is an initial state in the transition system, and to the interpretation of the input vocabulary (a subset of $\tau$), e.g., a graph. The assumption is that, outside of that input graph, $\strA$ interprets all predicate symbols in $\tau$ arbitrarily.

\subsection{Cardinality Examples}

 \medskip

\noindent \fbox{\parbox{\dimexpr\linewidth-2\fboxsep-2\fboxrule\relax}{ Problem: \textbf{EVEN} $\alpha_E $
		
		Given: A structure $\strA$ with an empty vocabulary. \\
 		Question: Is $|dom(\strA)|$ even?}}%
 \medskip 
  
  EVEN is PTIME computable, but is not expressible in MSO, Datalog, or any fixed point logic, \emph{unless} a linear order on the domain elements is given. 
We construct a 2-coloured path in the transition system using $E$ and $O$ as labels. 
Due to using a $\BG$ check, the elements in the path never repeat. Each trace gives us an implicit linear order on domain elements.   For  deterministic atomic modules, we omit the Epsilon operator.
$$
 \begin{array}{lcl}
 \GuessP & : = &  \varepsilon\  \defin{  P(x) \rul \ \ \ },
 \\
 \CopyPO & : = & \defin{O(x ) \rul \underline{P(x)}},\\
 \CopyPE & : = & \defin{ E(x ) \rul \underline{ P(x)}}.
 \end{array}
 $$
$$
 \begin{array}{l}
 \hspace{-3mm}\GuessNewO  : = 
 \big(\GuessP  \comp   \BG (\sigma^r_{P \not \eq  E }    (\sigma^r_{P \not \eq O } (\id)))\big) \comp  \CopyPO,\\
\hspace{-3mm}\GuessNewE  : = 
 \big(\GuessP  \comp   \BG (\sigma^r_{P \not \eq E }    (\sigma^r_{P \not \eq O } (\id)))\big) \comp  \CopyPE.
 \end{array}
 $$
Here, $\sigma^r_{P \not \eq O}(\id)$  is an abbreviation for $\rneg\!\!\!\sigma^r_{P  \eq O}(\id)$.
 The problem EVEN is now specified as:
 $ 
 \alpha_E:= (\GuessNewO \comp  \GuessNewE ) ^\iter \comp \rneg \GuessNewO.
 $
 The program is successfully executed if each chosen element is  different from any elements selected so far in the current sequence of choices, and if $E$ and $O$ are guessed in alternation.

While $E$ and $O$ are singletons, their interpretations change  throughout the computation, the same way the content of registers changes in Register machines. Thus, the computed relations (called IDB in the Database community) are accumulated in the path.

 The expression $\alpha_E$ does not depend on an input vocabulary. 
 Given a structure $\strA$ over an empty  vocabulary, 
 $
 \strA \models_T | \alpha_E \rangle \Last,
 $
 holds whenever there is a successful execution of  $\alpha_E$, that is, the size of the input domain is even.

 \vspace{3mm}

 \noindent \fbox{\parbox{\dimexpr\linewidth-2\fboxsep-2\fboxrule\relax}{ Problem: \textbf{Size Four} $\alpha_4$
 		
 		Given: A structure $\strA$ with an empty vocabulary. \\
 		Question: Is $|dom(\strA)|$ equal to 4?}}
 \smallskip 
 $$
 \begin{array}{lcl}
 \GuessP & : = &  \varepsilon\  \defin{  P(x) \rul \ \ \ },\\
 \CopyPQ & : = & \defin{Q(x ) \rul \underline{P(x)}}.
 \end{array}
 $$
 $$
 \begin{array}{l}
 \GuessNewP  : = 
 \big(\GuessP  \comp   \BG (\sigma^r_{P \not \eq Q }     (\id))\big)\ \comp \ 
\CopyPQ.
 \end{array}
 $$
 The problem Size Four is now axiomatized as:
 $$ 
 \begin{array}{r}
 \alpha_4:= \GuessNewP^4   \comp \rneg \GuessNewP,  
  \end{array}
 $$
where the power four means that we execute the guessing procedure four times sequentially.
 The answer to the question $
 \strA \models_T | \alpha_4 \rangle \Last,
 $
 is yes, if and only if the input domain is of size 4. Obviously, such a program can be written for any natural number.

\vspace{2mm}
 
 \noindent \fbox{\parbox{\dimexpr\linewidth-2\fboxsep-2\fboxrule\relax}{ Problem: \textbf{Same Size} $\alpha_{\rm eq\_size}(\underline{P},\underline{Q})$
 		
 		Given: Two unary relations P and Q. \\
 		Question: Are P and Q of the same size?}}%
 \medskip

 We pick, simultaneously,  a pair of elements  from the two input sets, respectively: 
 $$
 \begin{array}{lcl}
 \PickPQ & : = & \varepsilon \  \defin{ \Pick_P(x ) \rul \underline{P(x)}
 ,\\ \Pick_Q(x) \rul \underline{Q(x)}
 }.
 \end{array}
 $$
Store the selected elements temporarily:
$$
\begin{array}{lcl}
\Copy & : = &   \defin{ P'(x ) \rul \underline{\Pick_P(x )},\\
 Q'(x ) \rul \underline{\Pick_Q(x )}
}.
\end{array}
$$
$$
\begin{array}{l}
\GuessNewPair  \ : = \\
\big(\PickPQ  \comp   \BG (\sigma^r_{Pick_P \not \eq  P' }  (\sigma^r_{\Pick_Q \not \eq  Q' }    (\id))\big)\ \comp \  \Copy.
\end{array}
$$
The problem Same Size is now axiomatized as:
$
\alpha_{\rm eq\_size} (\underline{P},\underline{Q}):=  (\GuessNewPair) ^\iter \comp \rneg \PickPQ.
$
The answer to the question $
\strA \models_T | \alpha_{\rm eq\_size} \rangle \Last ,
$
is yes, if and only if the extensions of predicate symbols in the input structure $\strA$ are of equal size. 
 
 \subsection{Reachability Examples}

\noindent
\fbox{\parbox{\dimexpr\linewidth-2\fboxsep-2\fboxrule\relax}{ 
		Problem: \textbf{s-t-Connectivity} $\alpha(\underline{E},\underline{S},\underline{T})$\\
		Given: Binary relation $E$,  two constants $s$ and $t$ represented as singleton-set relations $S$ and $T$. \\
		Question: Is $t$ reachable from $s$ by following the edges?}}
\medskip  

We use the definable  constructs of imperative programming. 
$$
\begin{array}{l}
\alpha(\underline{E},\underline{S},\underline{T}) :=  \ \  M_{base\_case}; \\
{\bf repeat \ }   \big(   M_{ind\_case}  \comp \  \BG (\sigma _{\Reach' \not \eq \Reach }  (\id) ) \big)  \comp \Copy  \\
 \hspace{1cm}   \ \ {\bf until } \ \sigma^r_{\Reach \eq T}(\id) .
\end{array}
$$
On the input, we have a binary relation $E$,  two constants $s$ and $t$ represented as singleton-set relations $S$ and $T$. Inputs $E$, $S$ and $T$ are underlined.
We  use a unary relational variable $\Reach$. Initially, the corresponding relation contains the same node as $S$. The execution is terminated when $\Reach$ equals $T$. Variable $\Reach'$ is used as a temporary storage.
To avoid guessing the same element multiple times, we use the (definable) Back Globally ($ \BG$) modality. 
Atomic modules are axiomatized in non-recursive Datalog with monadic output predicates as follows:
$$
\begin{array}{l}
\begin{array}{lcl}
M_{base\_case}& : = &  \  \defin{   \Reach(x ) \rul \underline{S(x)}},
\end{array}\\
\begin{array}{l}
M_{ind\_case}  : =  \varepsilon
\  \defin{   \Reach'(y ) \rul \underline{\Reach(x)}, \underline{E(x,y)}},
\end{array}\\
\begin{array}{l}
\Copy \ \ : = 
{\ \ \  }  \ \defin{  \ \Reach(x ) \rul \underline{\Reach'(x)}}.
\end{array}
\end{array}
$$
Here, module $M_{ind\_case}$  is the only non-deterministic module. For the other two modules, $\varepsilon $ in the expression for $\alpha(\underline{E},\underline{S},\underline{T})$ is omitted because they are deterministic (i.e., the corresponding binary relation is a function).
Input symbols of each atomic module are underlined.
Given structure $\strA$ over a vocabulary that matches the input variables $E,S$ and $T$, including matching the arities,   by checking 
$
\strA \models_T |\alpha \rangle \Last,
$
we verify that there is a successful execution of  $\alpha$. That is, $t$ is reachable from $s$ by following the edges of the input graph.

\vspace{2mm}

\noindent \fbox{\parbox{\dimexpr\linewidth-2\fboxsep-2\fboxrule\relax}{

		{Problem: \textbf{Same Generation}  $\alpha_{\rm SG}(\underline{E},\underline{\Root}, \underline{A},\underline{B})$}
		
		Given: Tree -- edge  relation: $E$;  root:   $\Root$;  two nodes represented by unary singleton-set relations:   $A$ and $B$ 
		
		Question: Do $A$ and $B$ belong to the same generation in the tree? }} 

\medskip

Note that,  since we do not allow binary \emph{output} relations (binary extensional variables are allowed),  we need to capture the notion of being in the same generation through coexistence in the same structure. 
		$$
		\begin{array}{lcl}
		M_{base\_case} & : = & \defin{ \Reach_A(x ) \rul \underline{A(x)},\\
			\Reach_B(x ) \rul \underline{B(x)}
		}.
		\end{array}
		$$
		We do a simultaneous propagation starting from the two nodes:
		$$
		\begin{array}{l}
		\hspace{-2mm} M_{ind\_case}  : =  
		\ \varepsilon \ \defin{  \Reach_A'(x ) \rul \underline{\Reach_A(y)}, \underline{E(x,y)},\\
			\Reach_B'(v ) \rul \underline{\Reach_B(w)}, \underline{E(v,w)} }.
		\end{array}
		$$
		 This atomic module
		specifies that, if elements $y$ and $w$, stored  in the interpretations of $\Reach_A$ and $\Reach_B$ respectively, coexisted in the previous state,  then $x$ and $v$ will coexist in the successor state. We copy the reached elements into ``buffer'' registers:
		$$
		\begin{array}{l}
		Copy  : =   
		\defin{ \Reach_A(x ) \rul \underline{\Reach_A'(x )},\\
			\Reach_B(x ) \rul \underline{\Reach_B'(x )}}.
		\end{array}
		$$
   The resulting interpretation of $\Reach_A$ and $\Reach_B$ coexist in one Tarski (first-order) structure, which is a state in a higher order Kripke structure.
The algebraic expression, using the definable imperative constructs, is:
$$
\hspace{-2mm}\begin{array}{l}
\alpha_{\rm SG}(\underline{ E},\underline{ \Root}, \underline{ A},\underline{ B})   :=  \ M_{base\_case};\\
\ \    {\bf repeat \ } 
\    M_{ind\_case}\  \comp\  \Copy ; \\
\ \ {\ \bf until } \ \sigma^r_{\Reach_A \eq {\Root}}(\sigma^r_{\Reach_B \eq {\Root}}(\id)).
\end{array}
$$
While this expression looks like an imperative program, it really is a \emph{constraint} on all possible sequences of choices. 
The answer to the question
$
\strA \models_T | \alpha_{\rm SG} \rangle \Last
$
is yes 
iff $A$ and $B$ belong to the same generation in the tree.

\vspace{2mm}

\subsection{Linear Equations mod 2}

\noindent \fbox{\parbox{\dimexpr\linewidth-2\fboxsep-2\fboxrule\relax}{
		
		{Problem: \textbf{mod 2 Linear Equations   $\alpha_{\rm F}$}}

	Given: system $F$ of linear equations  mod 2 over vars $V$ given by two ternary relations $\Eq_0$ and $\Eq_1$ 
		
		Question: Is $F$ solvable? }}

\bigskip

We assume that $V^{\strA}$ is a set, and $\Eq_0^{\strA}$ and $\Eq_1^{\strA}$ are  relations, 
both given by an input structure $\strA$ with $dom(\strA)=V^{\strA}$. Intuitively, $V^{\strA}$ is a set of variables, and $(v_1,v_2,v_3) \in \Eq_0^{\strA}$ iff $v_1\oplus v_2\oplus v_3 =0$, and $(v_1,v_2,v_3) \in \Eq_1^{\strA}$ iff $v_1\oplus v_2\oplus v_3 =1$. Such systems of equations are an example of constraint satisfaction problem that is not solvable by $k$-local consistency checks. This problem is known to be closely connected to the construction by Cai et al. \cite{CFI92}, and is not expressible in infinitary counting logic, as shown by Atserias, Bulatov and Dawar  \cite{AtseriasBD09}.  Yet, the problem is solvable in polynomial time by Gaussian elimination. We use the dynamic $\e$ operator to arbitrarily pick \emph{both} an equation (a tuple in one of the relations) and a variable (a domain element), on which Gaussian elimination $\Elim$ is performed.
$$
\begin{array}{l}
\alpha_{\rm F}(\underline{{ Eq_0}},\underline{{ Eq_1}},\underline{{ V}})   :=  \ M_{base\_case};\\
\ \   {\bf repeat \ } 
\    \Pick\_\Eq\_V \  \comp\  \Elim \  

{\ \bf until } \ \rneg \Pick\_\Eq\_V.
\end{array}
$$
Then, we have 
$\strA \models_T | \alpha_{\rm F} \rangle \Last
$\ \ iff\ \  $F$ is solvable.

\subsection{Observations}
In Reachability examples (s-t-Connectivity, Same Generation), propagations follow \emph{tuples} of domain elements given by the input structure, from one element to another. 
In Counting examples (Size 4, or any fixed size, Same Size, EVEN),
propagations are made arbitrarily, they are \emph{unconstrained}. 
In Mixed examples (mod 2 equations, CFI graphs),
propagations are  \emph{of both kinds}.
We believe that this is the reason of why adding \emph{just} counting to FO(FP) is not enough to represent all properties in PTIME \cite{CFI92} 
--- e.g., the algorithm for mod 2 Equations needs to interleave constrained and unconstrained propagations. Counting by itself cannot accomplish it.

\subsection{Mimicking Regular Expressions}
We now show that our logic can model the constructs  of Regular Expressions: the Kleene Star (also known as iteration or reflexive transitive closure) and the Union operation (called ``non-deterministic choice''  in programming language theory literature). The key idea  is to exploit the atomic non-determinism.
We assume that the input structure  contains a secondary domain, a set of elements that  serve as names of process expressions. This domain is
represented by a unary relation that interprets  relational variable $\Process$. This assumption is necessary for $\e$ to mimic non-determinism of Regular expressions.
$\ChooseProcess$ is a non-deterministic atomic transition defined as follows.
$$
\begin{array}{l}
\ChooseProcess \  : = \\
{\ \ \ \ \ \ } \e\defin{ \ChoosenProcess(x ) \rul \underline{\Process(x)}}.
\end{array}
$$
The Kleene Star operation is represented as 
$$
\begin{array}{l}
KleeneStar(\Process_1) \ := \ \\
\textbf{while} \ \sigma_{\ChoosenProcess = \Process_1}(\id)\\
\textbf{do} \ \ChooseProcess \comp \ExecuteProcess_1.
\end{array}
$$
The Union operation of Regular Expressions is expressed as follows.
$$
\begin{array}{l}
\Union(\Process_1,\Process_2) \ := \ \\
 \ \ChooseProcess \comp (\ExecuteProcess_1 \sqcup \ExecuteProcess_2).
\end{array}
$$
Here, $\ExecuteProcess_i$ is a modification of the $i$-th process expression, where, in addition to its operations, we carry its name. The name is represented by a randomly assigned element of the secondary domain of the input structure, i.e., the interpretation of the relational variable $\Process$.

Notice that, unlike the previous examples of polynomial-time Reachability and Cardinality properties, non-determinism here is used in a qualitatively different  way.  Using \emph{one} arbitrary, but fixed,  Choice function, as in Theorem \ref{th:membership-PTIME}, would not give us the full power of Regular Expressions.

\vspace{2ex}

\section{PTIME Turing Machines}\label{sec:PTIME-TM}

In this section, we demonstrate that our logic is strong enough to axiomatize any PTIME Turing machine over unordered structures. More specifically, we show that, for every polynomial-time recognizable class $\cK$ of structures, there is a sentence of logic $\Lo$ (introduced in Definition \ref{def:main-logic}) whose models are exactly $\cK$.
This corresponds to part (ii) of Definition \ref{def:logic-captures-PRTIME} of a logic for PTIME.

 We focus on the boolean query 	
$
   	\strA  \models_T |\alpha_{\rm TM} \rangle \Last	
$
 (see (\ref{eq:models-last})) and outline such a construction. 
The main idea is that a linear order on domain elements  is \emph{induced} by a path in a Kripke structure, that is, by a sequence of choices. In this path, we produce new elements one by one, as in the EVEN example. The linear order corresponds to an order on the tape of a Turing machine. After such an order is guessed, a deterministic computation, following the path, proceeds for that specific order. 
Note that, similarly to Theorem \ref{th:membership-PTIME} and unlike our construction for Regular Expressions, \emph{one arbitrary Choice function is enough to generate a specific order} on a tape of a deterministic  Turing machine. 
We now explain this construction.

We assume, without loss of generality that the machine runs for $n^k$ steps, where $n$ is the size of the domain. The program is of the form:
$$
\alpha_{\rm TM} (\strA):= 
 {\rm ORDER} \ \comp \ {\rm START}\  \comp \
  {\bf repeat \ } 
   {\rm STEP}  {\ \bf until } \ {\rm END}.
$$ 
\ignore{
$$
\begin{array}{ll}
\alpha_{\rm TM} (\strA):= & 
{\rm ORDER} \ \comp \ {\rm START}  \comp \\
&  \ \ \ \  \ \ \ \  \ \ \ \ \ \ {\bf repeat \ } \\
&  \ \ \ \   \ \ \ \   \ \ \ \  \ \ \ \  \ \ \ \  {\rm STEP} \\
&  \ \ \ \  \ \ \  \ \  \ \ \ \ {\ \bf until } \ {\rm END}.
\end{array} 
$$ 
} 
Procedure ORDER: Guessing an order is perhaps the most important part of our construction. This is the only place where $\varepsilon$ is needed.
We use a secondary numeric domain with a linear ordering. We guess elements one-by-one, as in the EVEN example, and associate an element of 
the primary domain with an element of the secondary one, using co-existence in the same structure.  Each path corresponds to a possible linear ordering.\footnote {Note that the order is defined \emph{not} with respect to an input Tarski structure, but with respect to the Kripke structure where  Tarski structures are domain elements.}  Once an ordering is produced, the primary domain elements are no longer needed (for search problems, $\BE$ modality can  be used  to reconstruct the correspondence.
Later, in START procedure and in the main loop, we use $k$-tuples of the elements of the  secondary domain for positions on the tape and time,   to count the steps in the computation. We use the lexicographic ordering on these tuples. 
The order on the elements of the auxiliary secondary domain is a \emph{binary} relation, but it is given on the \emph{input} (all non-input relations are unary). Thus, at any time of the computation, we can check that relation. Using the property of LLIF that, once we are on a path, we stay on it, ``next in the lexicographic ordering'' relation can be produced on the fly, using tuples of  unary definable relations, that change from state to state in the Register Kripke structure $\cR_{\strA}$, see Definition \ref{def:Register-structure}.

Procedure START: 
This procedure creates  an encoding of the input structure $\strA$ (say, a graph)
in a sequence of structures in the transition system, 
to mimic an encoding $enc(\strA)$ on a tape of a Turing machine. We use structures to represent cells of the tape of the Turing machine (one $\tau$-structure = one cell).  
The procedure follows a specific path, and thus a specific order  generated by the procedure START. An element is smaller in that order if it was generated earlier, which can be checked using  nested $\BE$ modalities.
Subprocedure 
$
{Encode}(\underline{vocab(\strA)}, \dots, S_\sigma, \dots , \bar{P}, \dots)
$
operates as follows. In every state (= cell), it keeps the input structure $\strA$ itself, and adds the part of the encoding of $\strA$ that belongs to that cell.
The interpretations of $\bP$ over the secondary domain of labels provide cell positions on the simulated input tape. 
Each particular encoding is done for a specific induced order on domain elements, in the path that is being followed. 

In addition to producing an encoding, the procedure START sets the  state   of the Turing machine to be
the initial state $Q_0$.  
It also sets initial values for the variables used to axiomatize the instructions of the Turing machine. 

Expression START is similar to the first-order formula  $\beta_{\sigma}(\bar{a})$ used by  Gr\"{a}del in his proof of capturing PTIME using SO-HORN logic on ordered structures \cite{Graedel:1991}. 
The main difference is that instead of tuples of domain elements $\bar{a}$ used to refer to the addresses of the cells on a tape, 
we use tuples $\bP$, also of length $k$.
Gr\"{a}del's formula $\beta_{\sigma}(\bar{a})$  for encoding input structures has the  following property:
$$
(\strA,<) \models \beta_{\sigma}(\bar{a}) \ \  \Leftrightarrow\ \  \mbox{ the $\bar{a}$-th symbol of $enc(\strA)$ is $\sigma$.} 
$$
Here, we have:  
$$
\begin{array}{c}
\ppath(\strA),0  \models_T Encode(\dots, S_\sigma, \dots , P_1(a_1), \dots, P_k(a_k), \dots ) \\ \Leftrightarrow\ \  \mbox{ the $P_1(a_1), \dots, P_k(a_k)$-th symbol of $enc(\strA)$ is $\sigma$,}
\end{array} 
$$
where $\ba$ is a tuple of elements of the secondary domain, $\ppath(\strA) $ is a path of the form (\ref{eq:string}) that starts in input structure $\strA$ and 
 induces a linear order on the input domain through an order on  structures
(states in the Register transition system $\cR_{\strA}$ rooted in $\strA$).  That specific generated  order is used in the encoding of the input structure. Another path produces a different order, and constructs an encoding for that order.

Procedure STEP: This procedure encodes the instructions of the deterministic Turing machine. Output-Monadic singleton-set  restrictions of Conjunctive Queries are well-suited for this purpose. Instead of time and tape positions as arguments of binary predicates as in Fagin's \cite{Fagin:1974} and Gr\"{a}del's \cite{Graedel:1991} proofs, we use coexistence 
in the same structure with $k$-tuples of domain elements, as well as lexicographic successor and predecessor on such tuples.\footnote{Recall how we used coexistence in a state in the Same Generation example.}
Polynomial time of the computation is guaranteed because 
time, in the repeat-until part, is tracked with $k$-tuples of domain elements.

Procedure END: This part checks if the accepting state of the Turing machine is reached.

We have that, for any PTIME Turing machine, we can construct term $\alpha_{\rm TM}$ in logic $\Lo$ introduced in Definition \ref{def:main-logic}, such that 
 $
   	\strA  \models_T |\alpha_{\rm TM} \rangle \Last	
$
if and only if the Turing machine accepts an encoding of $\strA$ for  some specific but arbitrary order of domain elements on its tape.

\vspace{2ex}

\section{Discussion}
\label{sec:Discussion}

In this section, we summarize our contributions, and discuss what needs to be done next, toward answering the main question. We also outline other future research directions.

\subsection{Summary and the Next Question} \label{sec:summary}
Motivated by the central quest for a logic for PTIME in Descriptive Complexity, we have defined  a Deterministic Dynamic Logic that can refer back in the executions. 
The logic is, simultaneously, an algebra of binary relations and a modal Dynamic logic over finite traces (here, called Two-Way DetLDL$_f$). While the operations  are rather restrictive (they are function-preserving), the logic defines the main constructs of Imperative programming. 
We use a \emph{dynamic} version of Hilbert's Choice operator Epsilon that, given a history, chooses a state, i.e.,  a domain element of a Kripke structure. This dependency on the history is crucial for formalizing, e.g., counting.
To provide a formal language perspective, we defined an equivalent Path semantics of DetLDL$_f$. The semantics associates, with each algebraic term $\alpha$,  a formal language $\cL(\alpha)$.

Atomic transductions are defined to be a singleton-set restriction of Monadic Conjuctive Queries.  These queries, intuitively, represent non-deterministic conditional assignments. Together with  algebraic terms used as control, this gives us a machine model.
We have demonstrated that any PTIME Turing machine can be simulated. 
 
Both reachability and counting properties on unordered structures can be expressed in the logic,   even though it does not have a special cardinality construct. 
Several examples in Section \ref{sec:Examples} illustrate these properties, including Linear Equations mod two. We have also shown how Regular Expressions can be mimicked using Choice functions. 

For a fixed Choice function, we have proven that the main task is in PTIME.  However, there could be, in general, exponentially many such concrete Choice functions.  Thus, it is important to understand \emph{under what syntactic conditions on terms $\alpha$, evaluating the main query
$$\strA \models_T\  |\alpha\rangle \Last\ [\e]$$ 
can be done naively}.
Naive evaluations follow the rules of Structural Operational Semantics from the proof of Theorem \ref{th:membership-PTIME} in Section \ref{sec:proof-membership-in-PTIME}, while treating the function variable $\e$ simply as an \emph{arbitrary constant}.  When such a naive evaluation is possible, the query is  clearly choice-invariant. 
The logic from Definition \ref{def:logic-captures-PRTIME} would be  the logic $\Lo$ introduced in Definition \ref{def:main-logic}, plus these decidable conditions.

\subsection{Outlook}

The need for the conditions  mentioned in Section \ref{sec:summary} above leads us to the crucial question: what are the advantages of our proposal over previous approaches to capturing PTIME? Our answer is that we simply have an algebra,
$$
 \begin{array}{c}	
\ \ \ 	\alpha  :: =   
\id  \mid \varepsilon M (\bZ)  \mid    \rneg\alpha  \mid    \lneg \alpha \mid  
\alpha\comp \alpha  \mid \\
\alpha  \sqcup \alpha \mid    \sigma^r_{X=Y} (\alpha) \mid    \alpha ^\iter  \mid   \alpha^\converse,
\end{array}
$$ 
and algebraic techniques have proven to be useful in the related CSP area, in the break-through results by Bulatov and Zhuk \cite{Bulatov17,Zhuk17}. 
Moreover, the situation, in our case, is  more transparent than in the CSP case. This is probably because, from the very beginning, we have   designed this algebra  with the desired condition in mind. The condition is \textbf{invertibility of algebraic terms}.\footnote{Our inspiration came from the well-known fact that reversible computations (such as those on Toffoli gates) do not consume any energy, and thus are the most efficient.} We identified invertibility as a condition for excluding non-determinism. Moreover, we believe that, for any given term in logic $\Lo$, this condition is decidable, and there are several ways of deciding it.

\subsection{Future Directions}
As one of the future directions, we would like to develop an automata-theoretic counterpart of DetLDL$_f$. With each  formula $\alpha$ in (\ref{eq:LLIF-two-sorted}), we will associate  an alternating 
automaton on finite words (AFW) $\cA_{\alpha}$.
A standard approach is then to transform the AFW to a non-deterministic finite automaton (NFA).
 However,  since we restricted all  terms to be deterministic, the  NFA is, in fact, a deterministic finite automaton (DFA).
An automaton for the full logic (\ref{eq:LLIF-two-sorted}),  needs to ``look inside the alphabet'' of $\tau$-structures $\Un$ to account for Selection, and to ``look backwards in time'' to account for the backwards-facing Unary Negation 
and Converse.

Another direction is to establish connections to non-classical logics.
We restricted classical connectives $\lor$ and $\neg$  to their deterministic counterparts, Preferential Union and two Unary Negations, respectively.   
Unary Negation has properties of Intuitionistic Negation. It is not idempotent, $\rneg \rneg \alpha \neq \alpha$, but is weakly idempotent, $\rneg \rneg \rneg \!\! \alpha = \rneg \alpha$.
 Moreover, $|\alpha] \phi$ seems similar to intuitionistic implication $\alpha \to \phi$. However, unlike intuitionistic logic, our semantics is parameterized with a  history-dependent Choice function, which leads to the duality of our box and diamond modalities. Moreover, we have argued that the dependency of the Choice functions on the past history is important for formalizing both programs and proofs in a proof system. 
 It would be interesting to investigate further connections to  intuitionistic and substructural logics, and well as to other logics of constructive reasoning.

\section{Acknowledgements}
\label{sec:Acknowledgements}
I am grateful to several colleagues for useful discussions,  and to Platform 7 coffee shop for their Secret Garden, a wonderful place to work, and an infinite supply of espresso. This paper was written, in part, while the author was a long-term visitor of the Simons Institute for the Theory of Computing in the Spring of 2021.

\bibliographystyle{alpha}
\bibliography{bibliography}

\end{document}